\definecolor{darkred}{rgb}{0.5,0,0}
\definecolor{lightblue}{rgb}{0,0.4,0.8}
\definecolor{darkgreen}{rgb}{0,0.5,0}
\newcounter{sideremark}
\definecolor{dkgreen}{rgb}{0,0.6,0}
\definecolor{gray}{rgb}{0.5,0.5,0.5}
\definecolor{mauve}{rgb}{0.58,0,0.82}
\tiny\color{gray},
\pgfplotsset{compat=newest}
\DeclareMathOperator{\Unif}{Unif}
\DeclareMathOperator{\E}{\mathbb{E}}
\DeclareMathOperator{\Prob}{\mathbb{P}}
\newcommand\smallO{
	\mathchoice
	{{\scriptstyle\mathcal{O}}}
	{{\scriptstyle\mathcal{O}}}
	{{\scriptscriptstyle\mathcal{O}}}
	{\scalebox{.7}{$\scriptscriptstyle\mathcal{O}$}}
}
\newcommand*{\defeq}{\mathrel{\vcenter{\baselineskip0.5ex \lineskiplimit0pt
			\hbox{\scriptsize.}\hbox{\scriptsize.}}}%
	=}
\newcommand{\aval}{0.789}
\newcommand{\bval}{1.24}
\newcommand{\pval}{0.421}
\newcommand{\rat}{0.72349}
\newcommand{\CR}{0.7235}
\newcommand\ndots{\makebox[1em][c]{.\hfil.\hfil.}}
\crefname{equation}{}{}
\newtheorem{theorem}{Theorem}[section]
\newtheorem{proposition}{Proposition}[section]
\newtheorem{lemma}{Lemma}[section]
\newtheorem{remark}{Remark}[section]
\newtheorem{definition}{Definition}[section]
\numberwithin{equation}{section}
\title{Prophet Inequalities:\\Separating Random Order from Order Selection}
\author{Giordano Giambartolomei\setcounter{footnote}{1}\thanks{King's College London.}, Frederik Mallmann-Trenn\footnotemark[2] and Raimundo Saona\setcounter{footnote}{7}\thanks{Institute of Science and Technology Austria.}
}
\date{}
\begin{document}
    \maketitle
    \thispagestyle{empty}
    \begin{abstract}
    	Prophet inequalities are a central object of study in optimal stopping theory. A gambler is sent values in an online fashion, sampled from an instance of independent distributions, in an adversarial, random or selected order, depending on the model. When observing each value, the gambler either accepts it as a reward or irrevocably rejects it and proceeds to observe the next value. The goal of the gambler, who cannot see the future, is maximising the expected value of the reward while competing against the expectation of a prophet (the offline maximum). In other words, one seeks to maximise the gambler-to-prophet ratio of the expectations. 
    	
    	The model, in which the gambler selects the arrival order first, and then observes the values, is known as Order Selection. In this model a ratio of $0.7251$ is attainable for any instance. Recently, this has been improved up to $0.7258$ by Bubna and Chiplunkar (2023). If the gambler chooses the arrival order (uniformly) at random, we obtain the Random Order model. The worst case ratio over all possible instances has been extensively studied for at least $40$ years. In a computer-assisted proof, Bubna and Chiplunkar (2023) also showed that this ratio is at most $0.7254$ for the Random Order model, thus establishing for the first time that carefully choosing the order, instead of simply taking it at random, benefits the gambler. We give an alternative, non-simulation-assisted proof of this fact, by showing mathematically that in the Random Order model, no algorithm can achieve a ratio larger than $\CR$. This sets a new state-of-the-art hardness for this model, and establishes more formally that there is a real benefit in choosing the order.
    \end{abstract}
    \clearpage
    \pagenumbering{arabic}
    \setcounter{footnote}{0}
	\section{Introduction}
	Prophet inequalities are a central object of study in optimal stopping theory. A gambler is sent nonnegative values in an online fashion, sampled from an instance of uniformly bounded independent random variables $\{V_i\}$ with known distributions $\{\mathcal{D}_i\}$, in adversarial, random or selected order, depending on the particular model. When observing each value, the gambler either accepts it as a reward, or irrevocably rejects it and proceeds with observing the next value. The goal of the gambler, who cannot see the future, is to maximise the expected value of the reward while competing against the expectation of a prophet (out of metaphor, the offline maximum or supremum, depending on whether the instance is finite or not). In other words, one seeks to maximise the gambler-to-prophet ratio of the expectations. The gambler represents any online decision maker, such as an algorithm or stopping rule. Probabilistically, we will refer to it as a \emph{stopping time} $\tau$: being online implies not being able to see the future, thus the gambler will always stop at a time $\tau$ such that the event $\{\tau=i\}$ depends, informally speaking, only on the first $i$ values observed.
	
    \subsection{Models of prophet inequalities}
	
    To date, several models and extensions of prophet inequalities have appeared in the literature. In this section we first introduce the basic terminology, the classical model, and then describe three relaxations of it, briefly reviewing the state-of-the-art concerning their hardness and the competitiveness of known algorithms. More details are provided in \Cref{related}.
    
    \subsubsection{Terminology}
    Due to the online nature of prophet inequalities, some terminology from \emph{competitive analysis} is usually borrowed. Since conventions can vary, we summarise it briefly. We refer to a worst-case gambler-to-prophet ratio of the algorithm (that is a ratio known to be achievable for any given instance by the algorithm) as \emph{competitive ratio}. Let $c\in[0,1]$ be a real constant. An algorithm is said to be $c$-competitive if it has a competitive ratio of $c$, meaning that it can attain at least a gambler-to-prophet ratio of $c$ for any instance, but it does not necessarily mean that $c$ is the highest possible ratio. An upper bound on any algorithm's highest possible competitive ratio for a given instance will be called \emph{hardness} of the instance. Saying that a prophet inequality model is $c$-hard or has a hardness of $c$, means that there is a $c$-hard instance for that problem, but it does not necessarily mean that $c$ is the lowest hardness possible amongst all instances for that model. A hardness for a model is said to be \emph{tight} (or \emph{optimal}) when it is matched by the competitive ratio of an algorithm solving it. Similarly, the competitive ratio of an algorithm solving a model is tight (or optimal) when it is matched by a hardness known for that model. Often, when determining a hardness or a competitive ratio, numerical computations are involved. Thus tightness can be used in a broad sense. For two models $A$ and $B$, we say that $A$ \emph{beats} $B$ if the hardness of $B$ is strictly less than the competitive ratio of an algorithm solving $A$. When $A$ beats $B$ or $B$ beats $A$, we say that $A$ and $B$ are \emph{separated}.
	
    \subsubsection{The classical Prophet Inequality}
	The very first model of prophet inequality, typically referred to as (adversarial) Prophet Inequality (PI), is due to Krengel and Sucheston \cite{KrenSuch77,KrenSuch78}. The given instance is composed of countably many independent random variables $\{V_i\}$ with known distributions $\{\mathcal{D}_i\}$ in a fixed given order, usually referred to as \emph{adversarial} order. The general dynamics aforementioned is followed: the gambler observes in an online fashion the sequence of sampled values, and makes irrevocable decisions to accept a value and stop, or continue observing, with the goal of maximising the expected value of the reward, while competing against the expectation of a prophet, which represents the offline supremum. More precisely, the goal is to maximise the gambler-to-prophet ratio, that is $\E V_\tau$ over $\E\sup_iV_i$, where $\tau$ is the stopping time associated to the gambler's stopping rule. In~\cite{KrenSuch78}, it was shown that the $\sfrac{1}{2}$-hardness of PI (shown by Garlin \cite{KrenSuch77}) is tight. Later it was shown that a competitive ratio of $\sfrac{1}{2}$ can be attained even by a \emph{single threshold} algorithm \cite{Sam84}. The classical PI has subsequently been relaxed giving more power to the gambler, which leads to larger competitive ratios.
	
	\subsubsection{IID Prophet Inequality, Order Selection and Random Order models}
    \label{models}
 	In this section we introduce our working assumptions and some basic notation, and then review the three main models of prophet inequalities related to our work. We will restrict ourselves to \emph{finite} instances, that is, we will always work with a size $n\in\mathbb{N}$ fixed and with non-negative random variables $V_1,\ldots,V_n$ having distributions $\mathcal{D}_1,\ldots,\mathcal{D}_n$. Therefore, it is natural that, if the gambler gets to the last stage, any value observed will have to be accepted. Thus the stopping time $\tau$ of the gambler will belong to the class of stopping times valued in $[n]\defeq\{k\in\mathbb{N}:\:k\leq n\}=\{1,2,\ldots,n\}$. We denote this class as $C^n$. Also, we denote the set of permutations of $[n]$ by $S_n$, and for any $\pi\in S_n$, we will adopt the \emph{inline} notation $\pi=(\pi_1,\ldots,\pi_n)$.
    
 	The following three variants of PI are related to our work. For readability, the following exact values will be denoted by their fourth decimal approximation, followed by dots, and viceversa, whenever using these decimal values, follwed by dots, we mean the corresponding exact constants: $1 - \sfrac{1}{e} = 0.6321\ndots$; $\sqrt{3}-1= 0.7321\ndots$; $\sfrac{1}{\beta}= 0.7451\ndots$, where $\beta\in\mathbb{R}$ is the unique solution to $\int_0^1 \frac{dx}{x(1-\log x)+\beta-1} = 1$.
 	
    \paragraph{IID PI.} A specialised case of PI, where the random variables are assumed independent and identically distributed (iid). The hardness of this problem has been shown to be $0.7451\ndots$ in~\cite{Kertz86}. That this is tight follows from a $0.7451\ndots$-competitive \emph{quantile strategy} devised in~\cite{Correa17}. Remarkably, the quantiles generating the thresholds do not depend on the distribution.
    
	\paragraph{Order Selection (OS).} Also known as \emph{free order}, OS is a variant where the gambler is allowed to \emph{select} the arrival order $\pi \in S_{n}$ first, and then observe the values sampled from $V_{\pi_1}, \ldots, V_{\pi_n}$, thus seeking to maximise the ratio of $\E V_{\pi_\tau}$ over $\E\max_{i\in[n]}V_i$, where the randomness is with respect to $\tau$ only. This version, introduced in \cite{Hill83}, is also central in the study of revenue maximisation in Posted Price Mechanisms~\cite{Einav18}. After various improvements (see \cref{related}) on the competitive ratio of $0.6321\ndots$ shown in~\cite{Chawla10}, the state-of-the-art for OS is the $0.7258$ competitive ratio obtained in \cite{BubChip23} by refining the $0.7251$-competitive algorithm exploiting \emph{continuous time arrival design} introduced in~\cite{PeTa22}. The $0.7451\ndots$-hardness of OS follows directly from the hardness of IID PI.
	
	\paragraph{Random Order (RO).} Also known as \emph{prophet secretary}, RO is a variant where random variables are shown in a uniform random order $\pi$ to the gambler, who observes the values sampled from $V_{\pi_1}, \ldots, V_{\pi_n}$ and seeks to maximise the ratio of $\E V_{\pi_\tau}$ over $\E\max_{i\in[n]}V_i$, where the randomness is with respect to both $\pi$ and $\tau$.
        One can equivalently see this model as arising from the OS setting, by saying that the gambler chooses the order $\pi$ uniformly at random in $S_n$. After various improvements (see \cref{related}) on the initial 
        competitive ratio of $0.6321\ndots$ achieved in~\cite{Esf17}, where the model was introduced, the state-of-the-art has been achieved in \cite{Harb24}, through a $0.6724$-competitive continuous version of the $0.6697$-competitive multi-threshold algorithms known as \emph{blind strategies} introduced in \cite{CorrSaZil21}. The $0.7321\ndots$-hardness of RO, proved analytically in \cite{CorrSaZil21}, has been recently improved in \cite{BubChip23}, where a new $0.7254$-hardness for the model is established in a computer-assisted proof.

	\subsection{Applications to pricing}
	Prophet inequalities are closely related to Posted Price Mechanisms (PPMs), which are an attractive alternative to implementing auctions and are often used in online sales~\cite{Einav18}. The way these mechanisms work is as follows. Suppose a seller has an item to sell. Customers arrive one at a time and the seller proposes to each customer a take-it-or-leave-it price. The first customer accepting the offer pays the price and takes the item \cite{Correa19}. 
    
    If a seller faces buyers with private information about their willingness to pay and there are no further transaction costs, an auction is optimal \cite{Einav18,HarrTown81,Myer81,RilSam81}. However, auctions can have high transaction costs. They take time and require communication with multiple buyers. There are several circumstances under which price posting may be preferable, being much simpler, yet efficient enough \cite{Einav18,Correa19}. 
    
    Since PPMs are suboptimal, it is important to know the ratio between PPMs and the optimal auction (Myerson's auction) \cite{Correa19}. This ratio can be studied from the point of view of prophet inequalities: designing PPMs can be translated into designing algorithms for prophet inequality models \cite{Haji07,Chawla10} and viceversa \cite{Correa19}. In particular, OS readily connects to ordering potential buyers in the PPM setting, while RO can be seen as taking potential buyers uniformly at random. Therefore, it is important to know if ordering the potential buyers can lead to improved performance, compared to a uniform random order. Since the introduction of OS $40$ years ago, there has been significant effort spent on finding how far the benefits of ordering go: for PPMs showing that ordering yields benefits is equivalent to separating RO from OS.
	
	\subsection{Our contributions}
	Our main contribution consists of improving on the hardness of RO, to the extent of separating it from OS, through rigorous mathematical analysis. The new hardness of $\CR$ follows from the asymptotic analysis of the online optimal algorithm applied to a simple instance, which we introduce next.
	
	\subsubsection{The hard instance}\label{hardinstance}
	The hard instance consists of $n$ iid random variables and a constant.
	\begin{definition}
		Let $b=\bval$, $p=\pval$ and fix $n\in\mathbb{N}$ large enough, so as to have a well defined random variable
		\[V\sim
		\begin{cases}
			n,&\text{w.p. }\:\sfrac{1}{n^2}\\
			b,&\text{w.p. }\:\sfrac{p}{n}\\
			0,&\text{w.p. }\: 1 - \sfrac{p}{n} - \sfrac{1}{n^2}.
		\end{cases}\]
		We define the instance $\{V_1,V_2,\ldots,V_{n+1}\}$, which consists of $n$ iid random variables distributed as $V$ and a degenerate one, equal to the constant $a=\aval$, that is
		\begin{equation}\tag*{H}\label{instance}
			V_i\sim\begin{cases}V,&1\leq i\leq n\\ a,&i=n+1.\end{cases}
		\end{equation}
	\end{definition}
	
	\subsubsection{Main results}
	To establish a hardness for RO, it is enough to show a uniform upper bound on the competitive ratio of all stopping rules, that is an upper bound, holding for some given instance and uniformly for all stopping rules $\tau$, on the ratio of $\E V_{\pi_\tau}$ over $\E\max_{i}V_i$. This is done by upper-bounding the expected reward $\E V_{\pi_T}$ of an \emph{optimal} stopping rule $T$, which is a stopping rule (existing by \emph{backward induction}) maximising the expected reward.
	\begin{theorem}\label{mainth}
		On Instance \ref{instance}, the optimal stopping rule $T\in C^{n+1}$ is such that
		\[\limsup_{n\longrightarrow\infty}\frac{\E V_{\pi_T}}{\E\max_{i\in[n+1]}V_i}< \CR.\]
        As a consequence, RO is $\CR$-hard. Therefore OS is separated from RO, and the former beats the latter. 
	\end{theorem}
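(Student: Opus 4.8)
The plan is to analyse the optimal (backward-induction) stopping rule $T$ on Instance~\ref{instance} directly, and to control its expected reward in the limit $n\to\infty$. First I would dispose of the denominator. The maximum equals $n$ exactly when one of the $n$ iid copies realises the value $n$, of probability $1-(1-n^{-2})^n=n^{-1}+o(n^{-1})$; it equals $b$ when no copy realises $n$ but some realises $b$, of probability $(1-n^{-2})^n-(1-\sfrac pn-n^{-2})^n\to 1-e^{-p}$; and it equals $a$ otherwise, of probability $\to e^{-p}$. Hence $\E\max_{i\in[n+1]}V_i\to M\defeq 1+b(1-e^{-p})+a e^{-p}$.

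For the numerator I would first fix the structure of $T$. Since for large $n$ the value $n$ exceeds every continuation value, $T$ always accepts the jackpot; for the other values, backward induction makes $T$ a threshold rule whose threshold at step $i$ — the continuation value $W_i$ — depends on the past only through whether the deterministic variable $a$ has already been observed. Writing $W_i^{(1)}$ and $W_i^{(0)}$ for this continuation value when $a$ is still in, respectively already out of, the pool, one has $W_i^{(1)}\ge W_i^{(0)}$ and both are non-increasing in $i$. Since $b>a$, state $1$ splits into (at most) three regimes in $i$ — accept only $n$; accept $n$ and $b$; accept $n$, $b$ and $a$ — and state $0$ into (at most) two — accept only $n$; accept $n$ and $b$ — with the state-$0$ switches occurring (weakly) earlier than the corresponding state-$1$ ones, and a transition into state $0$ occurring exactly when $a$ is observed during a regime in which it is rejected.

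The heart of the proof is the $n\to\infty$ asymptotics of this recursion. Rescaling the step as $t=i/n\in[0,1]$: the $b$-values arrive as a Poisson process of rate $p$, the variable $a$ at an independent uniform time, and the jackpot at an independent uniform time but with probability only $\sim 1/n$ — and since it is always accepted, its net effect is to add reward at rate $n\cdot\tfrac1n=1$ for as long as the gambler remains active. In the limit the recursion becomes a coupled forward--backward system: a forward part for the survival probabilities (the chance of still being active at time $t$ in each of the two states, solving a linear system of ODEs dictated by the regimes), and a backward part in which each of $W^{(1)},W^{(0)}$ satisfies an equation incorporating the future $b$-arrivals and the rate-$1$ jackpot bonus accrued while active, with $W^{(1)}$ additionally coupled to $W^{(0)}$ through the possible rejection of $a$. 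Because the genuine thresholds are just the constants $b$ and $a$, solving this system reduces to a fixed-point problem for the finitely many regime-switching times; from its solution one reads off $\alpha(t)=\lim_n\PP(T\ge tn)$ and assembles $\lim_n\E V_{\pi_T}=W^{(1)}(0)$ as the sum of the jackpot term $\int_0^1\alpha(t)\,dt$, the term $b\cdot\PP(\text{some }b\text{ is accepted})$ and the term $a\cdot\PP(\text{the constant is accepted})$.

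Finally I would substitute $b=\bval$, $a=\aval$, $p=\pval$, evaluate the resulting constants, and verify $\lim_n \E V_{\pi_T}/M<\CR$; since $M>0$ this yields the stated $\limsup$ bound and hence the $\CR$-hardness of RO. The separation is then immediate: by \cite{PeTa22} (refined in \cite{BubChip23}) there is an OS algorithm with competitive ratio at least $0.7251>\CR$, which therefore beats the just-established $\CR$-hardness of RO. I expect the main obstacle to be the third paragraph: correctly deriving and solving the limiting coupled system — in particular handling the two states created by the constant $a$ and the rate-$1$ contribution of the vanishing-probability jackpot — and rigorously justifying that the $n\to\infty$ limit commutes with the backward induction, by controlling the $o(1)$ per-step errors uniformly over the $\Theta(n)$ steps.
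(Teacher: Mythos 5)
Your overall skeleton matches the paper's at the level of bookkeeping: the prophet expectation (the paper's \Cref{max}), the description of the optimal rule through two continuation-value sequences indexed by whether the constant $a$ has been seen (the paper's $\phi_k$, $\bar\phi_k$), and the final separation step via the $0.7251$-competitive OS algorithm of \cite{PeTa22}. However, the heart of your argument --- your third paragraph --- is exactly where the real work lies, and it is left as a programme rather than a proof, with two concrete gaps. First, you assert the regime structure and the ordering of the switching times (the state-$0$ switches occurring weakly earlier, acceptance of $a$ coming last) as if they followed from an obvious monotonicity $W^{(1)}_i\ge W^{(0)}_i$. This is not automatic: when $a$ is still in the pool, one future slot holds the constant $a$ \emph{instead of} a fresh copy of $V$, which might have realised $n$ or $b$, so there is no trivial coupling giving $\bar\phi_k\ge\phi_k$ or the ordering of the acceptance times. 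In the paper this is \Cref{accept}(c,d,e), and its proof consumes most of the technical effort, relying on the parameter Conditions \ref{conditionII}, \ref{conditionIII} and \ref{conditionV} through rather delicate contradiction arguments; without such an argument your ``finitely many regime-switching times'' picture, and hence the form of your limiting system, is unjustified.

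Second, your route requires the \emph{exact} limiting reward of $T$, and therefore the exact limiting location of the state-$1$ switch for $b$ (the paper's $\bar k_n/n$), obtained as the solution of a coupled forward--backward fixed-point problem; you give no argument for solvability or uniqueness of that system, nor for the uniform (over $\Theta(n)$ steps) error control needed to commute the $n\to\infty$ limit with backward induction --- you flag this yourself as the main obstacle. The paper deliberately sidesteps precisely this step: it never identifies $\lim_n \bar k_n/n$, but only proves the bracketing $\mu_*\le \bar k_n/n\le\lambda_*$ together with the closed-form exponential quadratic of \Cref{optimal}, and then bounds $\E V_{\pi_T}$ by $\max_{\nu\in[\mu_*,\lambda_*]}q(\nu)$, reducing the whole problem to one-dimensional calculus (convexity of $q'$ via Conditions \ref{conditionI} and \ref{conditionIV}) plus a bisection with explicit error analysis. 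Unless you either carry out your ODE/fixed-point analysis rigorously or replace it by a bounding device of this kind, the proposal does not yet deliver the claimed bound $<\CR$; as written it is a plausible roadmap whose decisive steps (the ordering lemma, the determination or bounding of the limiting switch time, and the uniform error control) are missing.
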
 
	
	\subsubsection{Our techniques}
	To derive mathematically a new state-of-the-art hardness of RO, we rely on an innovative asymptotic analysis of the optimal algorithm's acceptance thresholds, computed via backward induction, in a random arrival order setting, so as to obtain asymptotic bounds on the competitive ratio of the optimal algorithm that have, in principle, the potential of applying to tight instances for RO. A key role in this analysis is played by the study of acceptance times for the values $a$ and $b$, depending on the possible histories of the process. The simplicity of Instance \ref{instance} ensures, by design, that there will be only one acceptance time for $a$, and two for $b$, depending on whether the history of the process has already rejected $a$ or not (see \Cref{acceptance} for details). 
    
    Future applications of these analytic techniques are not limited to the hard instance studied, nor to RO. In fact, Instance \ref{instance} is found by analysing a much larger class of instances (see \Cref{application} for details), which can be characterised as a collection of a positive constant and $n$ iid random variables (we will informally refer to $n$ as the size of the instance), supported on a set of three nonnegative values:
    \begin{itemize}[noitemsep] 
    \item a size-dependent value always accepted by the optimal algorithm, which gets larger but also disproportionately more unlikely as $n$ grows; 
    \item a null value, which is never accepted until the final step, and which gets increasingly likely as $n$ grows;
    \item a positive value, which gets more unlikely (but to a lesser extent than the size-dependent value) as $n$ grows, and whose acceptance depends on the past values and the stage the algorithm is at; we thus refer to this value as \emph{nontrivial}. 
    \end{itemize}
    
    This class of instances, which we analyse as the size grows, is thus fairly general and broadly speaking it contains the $0.7321\ndots$-hard instance studied in~\cite[\S7]{CorrSaZil21}, which is composed of $n$ iid random variables taking value $n$ with probability $\sfrac{1}{n^2}$ and zero otherwise, and an additional positive constant $a$, which is a nontrivial value for the optimal algorithm. Our class is obtained by adding to the iid distributions an increasingly unlikely point mass at another nontrivial value.
	
	\subsection{Previous bounds on Random Order}\label{upperbounds}
	To fully understand the significance of previous bounds on RO, it is best to distinguish between two versions of the model implicitly appearing in the literature (thus the terminology adopted here is nonstandard). 
	\begin{itemize} [noitemsep]
		\item \textbf{Undisclosed RO}. Also known as \emph{anonymous} RO, this is the original version, enforcing \emph{undisclosed} uniform random order, as per the definition given in \cite{Esf17}: at time $i$, only the realised value of $V_{\pi_i}$ is sent to the gambler, who in general will never know the label $\pi_i$ of the distribution it has been sampled from (in particular cases the \emph{a priori} knowledge of the distributions and the history of the probed values allow the gambler to deduce it). The $0.6321\ndots$-competitive multi-threshold algorithm designed in \cite{Esf17} is still the state-of-the-art for this model, and it is not known whether it is $0.6321\ndots$-hard or not.
		\item \textbf{Disclosed RO}. Also known as \emph{personalised} RO, this version, whose introduction is motivated by PPMs, where prices can be personalised, allows for a \emph{disclosed} random order: at time $i$, the realised value $V_{\pi_i}$ is sent to the gambler together with the label $\pi_i$, revealing what distribution the sample observed comes from. Thus the gambler acquires information about the order, as it observes more values. All the algorithms surpassing $0.6321\ndots$-competitiveness for RO implicitly refer to this model, which is therefore the most commonly referred to in the more recent literature. It is also a more natural point of view when seeing RO within the OS framework (where every selected label $\pi_i$ is clearly known in advance).
	\end{itemize}
	
	For the gambler, undisclosed RO is no easier than disclosed RO, since in the undisclosed model less information is provided.\footnote{A similar terminology has been introduced in \cite{BubChip23}, although referring to different models: \emph{order-aware} and \emph{order-unaware} RO. To avoid confusion, we stress that the \emph{order-aware} model does not refer to disclosed RO; rather, it is a model of RO, where the algorithm is partially offline, due to knowing both the past and the future labels $\pi_i$. Equivalently, the order is known in advance not only by the prophet, but by the gambler as well. Our analysis is not concerned with this recently introduced variant of RO. However, our analytic techniques could easily be extended to study the optimal algorithm for this variant on Instance \ref{instance}. 
 In~\cite[\S1.1]{BubChip23}, the authors refer to the present work as using \emph{the same idea} as theirs. Indeed, both works generalise, in different and independent ways, the hard instance presented in~\cite{CorrSaZil21}. Apart from this common origin of the hard instance, this work shares very little with theirs. Techniques and methodologies are very different. We conduct a theoretical and analytic study of RO directly, while they leverage a simulation-assisted analysis of the expected return of order-aware RO.
	} However, it is not known if there is a real benefit in disclosing the order, that is if the two versions of RO are separated. Although in undisclosed RO the algorithm does not know what distribution the value shown comes from \emph{in general}, for specific instances it is possible to deduce this information, fully or partially, from the \emph{a priori} knowledge of the distributions given and the history of the probed values. In particular, for some instances, disclosed and undisclosed RO can be equivalent. This is precisely the case for Instance \ref{instance}, as the support of all distributions is given as \emph{a priori} knowledge to the algorithm: values $0$, $b$, $n$ are coming from $V_1,\ldots,V_n\sim V$, so probing them at step $k$ means that $\pi_k(\omega)\in[n]$, while probing $a$ at step $k$ means that $\pi_k(\omega)=n+1$. Thus the order oblivious stopping rule has, for Instance \ref{instance}, a way of extracting partial information about $\pi$ from the history of the probed values. Although partial, distributionally the inferred information is equivalent to the one that the disclosed RO model would allow: indeed besides $V_{n+1}$, all other distributions are iid and knowing also in which order they arrived in the past offers no advantage in expectation. Therefore, the distinction between the two RO models does not affect the analysis in the present work. For simplicity, its set-up being more straightforward from a measure-theoretic point of view, we will adopt the simpler framework of undisclosed RO, and the hardness achieved will nonetheless hold also for disclosed RO.
	
	For the gambler, RO is no easier than OS: a uniform random permutation can always be selected. Therefore, the competitive ratio achievable by an algorithm with OS is at least the one achievable with RO. However, this does not ensure that the gambler can benefit from selecting the order, or equivalently, that RO and OS are separated.
	
	Since its introduction in~\cite{Esf17}, the only non trivial bound for RO was the $0.7451\ndots$-hardness of the IID PI model (which can be seen as a special case) proved in~\cite{Kertz86} almost $30$ years earlier. Unlike the design of algorithms for RO, which made a relatively quick progress (see \Cref{related}), it was not until $6$ years later that the $0.7321\ndots$-hard instance previously described was found and mathematically analysed in~\cite[\S7]{CorrSaZil21}. On the other hand, in~\cite{PeTa22} it was proved that the competitive ratio for OS is at least $0.7251$. Refining the techniques introduced in~\cite{PeTa22}, this ratio was recently improved in~\cite{BubChip23} to a lower bound of $0.7258$. These mathematical arguments alone fall short of proving that RO and OS are separated.
	
	In \cite[\S6]{BubChip23} a brute force simulation is provided, which improves on the hardness of RO enough to show that RO is indeed separated from OS. The simulation consists of a dynamic program, run on a $26$-parameters instance, yielding a $0.7254$-hardness for RO, which is less than $0.7258$, the state-of-the-art competitive ratio of OS. 
	From a mathematical standpoint, these numerics could, in principle, be extrapolated into a rigorously proven upper bound through formal error analysis, but this has not been carried out. Furthermore, the computed upper bound of $0.7254$ for RO differs from the improved lower bound for OS only by $0.0004$. A study of the propagation of errors in the numerics involved would be beneficial. On the other hand, due to the shear amount of parameters, the numerically found instance is hardly formally tractable, and the resulting computational implementation is quite involved. Since the last step of this argument relies on the heuristic confidence, which we share, that the decimals provided are unaffected, we deem it to be still valuable to find a mathematical argument establishing the result. 
	
	In this paper we provide such formal argument, as we focus on shedding light on the inner workings of RO, rather than on OS. We pinpoint a much simpler instance, which yields a significantly improved state-of-the-art hardness for RO of $0.7235$. At the same time this provides a straightforward and independent proof, since it comes without the need for further improving on the competitive ratio of $0.7251$ for OS. Numerically, we rely on a root finding routine \textit{only} to compute explicitly the value of the theoretical hardness obtained, which is formally expressed in terms of the zero of a transcendental function. In addition to providing a stronger bound, our technique has, in principle, the potential of yielding tight instances for RO, whereas this is provably not possible for the simulation of \cite{BubChip23}[p. 305].
    
	\subsection{Related work}\label{related}
	In this section, we briefly review some of the vast literature on prophet inequalities, with an emphasis on IID PI, OS and RO. There is in fact a variety of extensions of prophet inequalities to combinatorial structures such as matroids, matchings and combinatorial auctions: the reader interested in these aspects is better served by surveys such as \cite{Correa18}, which contains many recent developments, as well as \cite{Lucier17} for an economic point of view (especially regarding PPMs); \cite{HillKertz92} is also an interesting survey, containing classical results concerning infinite instances. In \Cref{models,upperbounds} we already explored the literature on hardness results, so we will use this section to review some of the developments on lower bounds, that is the progress made on the competitive ratio of the various algorithms designed for these models. An optimal algorithm can be found by backward induction. However, it is of little practical use outside hardness arguments, thus the research has focused on designing similar (that is, threshold-based) but simpler algorithms.
    \paragraph{IID PI.} The first to surpass the $\sfrac{1}{2}$ barrier of PI was a $0.6321\ndots$-competitive algorithm based on complicated recursive functions designed in \cite{HillKertz82}; $35$ years later an approximate single threshold $0.7380$-competitive algorithm was found in \cite{Abolhass17}, after which the tight optimal competitive ratio of $0.7451\ndots$ was finally attained as reviewed in \Cref{models}.
    \paragraph{OS.} The first to surpass the $\sfrac{1}{2}$ barrier of PI were certain PPMs proved to be $0.6321\ndots$-competitive in \cite{Chawla10} (recall that PPMs and prophet inequalities are equivalent). After $8$ years, further improvement came from a $0.6346$-competitive algorithm actually designed for disclosed RO (recall that RO is no easier than OS) in \cite{AzarChipKap18}. Shortly after, a $0.6541$-competitive PPM appeared in \cite{Bey21}. Further improvement came successively from another algorithm designed for disclosed RO: the already mentioned $0.6697$-competitive algorithm introduced in \cite{CorrSaZil21}. Thus we reached the state-of-the-art of the $0.7251$-competitive algorithm designed in \cite{PeTa22}, which has been recently polished, so as to be $0.7258$-competitive in \cite{BubChip23}. There are also special cases for which algorithms can achieve an even better competitive ratio: in \cite{Abolhass17} it was shown that when each distribution in the instance occurs $m$ times, a $0.7380$-competitive algorithm can be designed for $m$ large enough. No progress has been made on the hardness of OS.\footnote{This is due to the fact that it requires finding the optimal ordering, and in \cite{AgraSetZha20} this has been shown to be NP-hard, even under a special case where all the distributions have $3$-point support, and where the highest and lowest points of the support are the same for all the distributions.}
    \paragraph{RO.} We already mentioned that, as of yet, no algorithm, which has surpassed the initial $0.6321\ndots$-competitiveness established in \cite{Esf17}, has been designed for undisclosed RO. In fact the algorithms in \cite{AzarChipKap18,CorrSaZil21}, mentioned in the previous point, are designed and directly analysed for continuous distributions. This analysis is compatible with undisclosed RO; however, if point masses are allowed, the analysis is extended through a standard argument exploiting stochastic tie breaking, which requires revealing the distributional identity of the random variable from which the samples probed come from. Hence stochastic tie breaking can only be performed with disclosed RO. In this sense, not only personalisation has been used to go beyond $0.6321\ndots$-competitiveness, but it has been shown as necessary for single-threshold algorithms: in \cite{Esh18} a single-threshold $0.6321\ndots$-competitive algorithm is designed, and it is also shown any such single-threshold algorithm must be tight; furthermore, in \cite{Esf17} it was also shown that if we admit point-masses in the distributions, single-threshold algorithms are $\sfrac{1}{2}$-hard. One interesting fact about RO is that $0.7451\ndots$-competitiveness can be recovered by using essentially a constant subset of the random variables \cite{Liu21}.\footnote{More specifically, given any instance $\mathbf{V}=\{V_1,\ldots,V_n\}$, for any $\varepsilon>0$ we can find an instance $\mathbf{V}'$ which is a subset of $n'=n'(\varepsilon)$ random variables from $\mathbf{V}$(note that $n'$ is independent of $n$, and thus constant with respect to the instance)
    , such that there is a stopping rule $\tau\in C^{n'}$ ensuring that the ratio of $\E V'_\tau$ over $\E\max_{i\in[n']}V'_i$ is greater than $0.7451\ndots-\varepsilon$.} Removing random variables is not the only way to push beyond the $0.7235$-hardness achieved in our work: \emph{large markets} hypothesis are also very helpful. For example, in~\cite{Abolhass17} it was shown that, if each distribution in an instance of size $n$ occurs at least $\Theta(\log n)$ times, a $0.7380$-competitive algorithm can be designed.

	\subsection{Organisation of the paper}
	In \Cref{Preliminaries}, we provide some background on optimal stopping theory and apply it to RO and Instance \ref{instance}. In \Cref{BIRO}, the general backward induction stopping rule is described informally. In \Cref{application} the explicit backward induction thresholds for Instance \ref{instance} are derived. In \Cref{acceptance} we derive the associated acceptance times, which are the starting point of our analysis. In \Cref{asymp}, we derive sharp two asymptotic estimates for these times and find their asymptotic reciprocal ordering (\Cref{accept}). 
	
	In \Cref{hardness}, we first compute the prophet's expectation (\Cref{max}) and the optimal algorithm's expectation (\Cref{optimal}) and derive sharp asymptotic estimates. Then we analyse the ratio asymptotically, which leads to an optimisation problem, solving which the upper bound of $\CR$ is derived (\Cref{mainth}).
	
	The reader can find in the Appendix the technical details of some of the proofs (\Cref{suppasymp,suppmax}); formal error analysis and reference to both the shared code used to determine the numerical value of our theoretical hardness, and the shared code that corroborates tightness of the estimate (\Cref{code}). 
	
	We included also a complementary discussion, intended solely for the reviewers, of some optional, more advanced, foundational aspects of optimal stopping theory for general finite random order processes, not required for the specific model studied in this paper (\Cref{suppreview}). In the main body these foundational aspects are addressed informally, for the sake of intuition, as customary in the prophet inequality literature.
	 
	\section{Preliminaries}
    \label{Preliminaries}
	
    In this section, we leverage classical tools from optimal stopping theory, in order to derive a general intuitive formulation of the dynamic program for RO. We then apply this formulation to Instance \ref{instance}, more formally, and derive acceptance times for the values of the instance. Finally, we asymptotically estimate the acceptance times, as the size of the instance grows.
    
	\subsection{Backward induction for random order processes}\label{BIRO}
	In a random order stopping problem, a gambler seeks to optimally stop a finite random order (stochastic) process, that is a random vector $X=(V_1, \ldots, V_{n+1})$, defined on a probability space $(\Omega,\mathcal{F},\Prob)$, whose components are permuted uniformly at random. When the order is fixed, since the (time) horizon $n+1$ is finite, standard \textit{backward induction} yields an optimal stopping rule, regardless of the dependencies between the components of the vector \cite[Theorem~3.2]{ChoRobSieg71}. Backward induction can also be naturally extended to random order processes in full generality, by making use of a filtration accounting for the randomly permuted order of arrival. 
 
    Intuitively, we construct a random order process by replacing the indices of the vector's components with those of a (uniform) random permutation of $[n+1]$, $\pi=(\pi_1,\ldots,\pi_{n+1})$, and denote it $X^\pi \defeq (V_{\pi_1},\ldots,V_{\pi_{n+1}})$. The natural filtration of this process is informally obtained by starting, as usual, with $\mathcal{F}_0 \defeq\{\emptyset, \Omega\}$, and for all $k\in[n + 1]$, letting $\mathcal{F}_k \defeq\sigma(V_{\pi_1},\ldots,V_{\pi_k})$. $\mathcal{F}_k$ denotes the $\sigma$-algebra generated by $V_{\pi_1},\ldots,V_{\pi_k}$, which is intuitively understood as the information, available to the gambler, pertaining the history of the process up to time $k$. Therefore, $\mathcal{F}_k$ encodes both the history of the random order and of the random values, within the capacity of the gambler's knowledge. A stopping time $\tau$ is a random time determined with no information other than the one provided by $\mathcal{F}_k$ (formally, $\{\tau=k\}\in\mathcal{F}_k$). It models the gambler, who can stop the process $X^\pi$, and the stopped value is denoted as the random variable $V_{\pi_\tau}$. We denote $(\Omega,\mathcal{F},\{\mathcal{F}_k\},\Prob)$ the supporting probability space for this model and let $\E$ denote the corresponding expectation and $\E_{\mathcal{F}_k}$ the expectation conditional on the history of the process up to time $k$.
    
    Recall that backward induction is a strategy for the gambler based on the following premise: \emph{take a value only if it is larger than the expected future reward}. Since at step $k$ the rewards can come only from the values $V_{\pi_k},\ldots,V_{\pi_{n+1}}$, this strategy defines backwards random rewards $\gamma_{n+1}, \ldots,\gamma_2, \gamma_1$ and stopping times $s_{n+1}, \ldots, s_2, s_1$. 
    For each $k \in [n+1]$, consider the case where the gambler rejected the first $k-1$ values. Then, the value $\E \gamma_k$ represents the expected reward that the gambler obtains using backward induction, characterised through the stopping times $s_k$, which implement the stopping rule of accepting the first value $V_{\pi_l}$ that exceeds the expected future reward (conditionally on the rejected values) $\E_{\mathcal{F}_l}\gamma_{l+1}$ for $l\geq k$. It holds that the expected reward of the stopping rule $s_1$ is optimal.\footnote{We will dive no further into the formal details of the general model, since it will not be necessary. They are included in \Cref{suppbackwardcomp} for the interested reader. There, we explicitly construct a random order filtration and state the equivalent version of \cite[Theorem~3.2]{ChoRobSieg71} for random order processes. We could not find a reference for it, and it is valuable for future work. It can be leveraged for extensions of the model, such as when allowing more general instances, random orders and dependencies.}

	\subsection{Backward induction on Instance \ref{instance}}\label{application}
	Consider RO on the following class of instances, to which it is strightforward to verify that Instance \ref{instance} belongs. 
		\begin{definition}
		Let $a,b,p\in\mathbb{R}$, $a<1<b$, be positive constants and fix $n\in\mathbb{N}$ large enough, so as to have a well defined random variable
		\[V\sim
		\begin{cases}
			n,&\text{w.p. }\:\sfrac{1}{n^2}\\
			b,&\text{w.p. }\:\sfrac{p}{n}\\
			0,&\text{w.p. }\: 1 - \sfrac{p}{n} - \sfrac{1}{n^2}.
		\end{cases}\]
		Furthermore, assume that $\log(1+pb)<p$ and that the following conditions hold:
		\[
		\frac{1+bp}{1+(b-a)p}\log\frac{1+bp}{1+(b-a)p}\leq ap \,,
		\tag*{I}
		\label{conditionI}
		\]
		\[
		(2-p)(b-a)<1 \,,
		\tag*{II}
		\label{conditionII}
		\]
		\[
		\frac{1-(2-p)(b-a)}{1+p(b-a)} < 1 + \frac{1}{p} \log\frac{1+p(b-a)}{1+pb} \,, 
		\tag*{III}
		\label{conditionIII}
		\]
		\[
		2+pb[1-p(b-a)] \ge 0 \,, 
		\tag*{IV}
		\label{conditionIV}
		\]
		\[
		\frac{bp[1+(b-a)p]}{(1+pb)\log(1+pb)}<1\,.
		\tag*{V}
		\label{conditionV}
		\]
		We define the instance $\{V_1,V_2,\ldots,V_{n+1}\}$, which consists of $n$ iid random variables distributed as $V$ and the constant $a$, that is
		\begin{equation*}
			V_i\sim\begin{cases}V,&1\leq i\leq n\\ a,&i=n+1.\end{cases}
		\end{equation*}
	\end{definition}
	To simplify our analysis, from now on, by Instance \ref{instance} we will refer to the above class, which we will need to specialise to the values $a=\aval$, $b=\bval$, $p=\pval$ only in the conclusion of \Cref{mainth}. With slight abuse of notation, we omit the reference to the permutation from $X^\pi$, thus simply denoting $X=(V_{\pi_1},\ldots,V_{\pi_{n+1}})$. All components of $X$ are independent and the process takes values in the finite state space 
    \[
        \mathcal{S}_n\defeq\{(x_1,\ldots,x_{n+1}):\;\exists!\: i\in[n+1],\, x_i=a,\:\forall\, j\neq i,\: x_j\in \{0,b,n\} \} \,.
    \]
    We can replace conditioning on the random order filtration with elementary conditioning on all the possible values of the random variables $V_1,\ldots,V_{n+1}$ in all the possible arrangements, due to $\mathcal{F}_k$ being generated by the singletons $\{ (X_1 = x_1, \ldots, X_k = x_k ) : \: x \in \mathcal{S}_n \}$ for all $k\in[n+1]$.
    
    We now state more formally backward induction in the discrete setting. Let $C_k^{n+1}$ be the set of stopping rules $\tau$ that never stop before time $k$, that is such that $k\leq \tau\leq n+1$. Note that we previously denoted $C^{n+1}\defeq C_1^{n+1}$. For ease of notation, in this section we omit the reference to the horizon and simply denote these classes as $C_k$. For every $x\in\mathcal{S}_n$, define backwards $\gamma_{n+1}, \gamma_n,\ldots, \gamma_1$ as functions of $x$, by setting $\gamma_{n+1}(x_1,\ldots,x_{n+1})\defeq x_{n+1}$ and for all $l=n,\ldots,1$, 
		\begin{align*}
			\gamma_l(x_1,\ldots,x_l)&\defeq\max\{x_l,\overline{\gamma}_l(x_1,\ldots,x_l)\},\\
			\overline{\gamma}_l(x_1,\ldots,x_l)&\defeq\E[\gamma_{l+1}(x_1,\ldots,x_l,X_{l+1})|X_1=x_1,\ldots,X_l=x_l].
		\end{align*}
		For each $k=n+1,\ldots,1$, let 
		\[s_k\defeq\inf\{l\geq k:\; \gamma_l(x_1,\ldots,x_l)=x_l\}\]
        and denote $T\defeq s_1$.
	\begin{theorem}\label{backwarddiscrete}
		Consider $X\defeq(V_{\pi_1},\ldots,V_{\pi_{n+1}})$, the backward induction values $\gamma_{n+1},\gamma_n, \ldots,\gamma_1$ and the stopping times $s_{n+1}, s_n,\ldots, s_1$ as previously defined. Then for any $k \in [n + 1]$, $s_k \in C_k$ and for any realization $(x_1,\ldots,x_k)$, the value $\gamma_k(x_1,\ldots,x_k)$ is the optimal value the gambler can obtain from time $k$. More formally, for all stopping rules $\tau\in C_k$ we have that
		\[
            \E ( X_{s_k} \mid X_1 = x_1, \ldots, X_k = x_k ) = \gamma_k(x_1,\ldots,x_k) \geq \E (X_\tau|X_1=x_1,\ldots,X_k=x_k) \,.
        \]
        Thus for all stopping rules $\tau\in C_k$,
		\[
            \E X_{s_k}=\E\gamma_k(X_1,\ldots,X_k)\geq \E X_\tau \,.
        \]
        In particular, for all stopping rules $\tau\in C_1$,
        \[
            \E V_{\pi_{T}}\defeq\E X_T=\E\gamma_1(X_1)\geq \E X_\tau\defeq \E V_{\pi_\tau} \,.
        \] 
        Thus $\E\gamma_1(X_1)=\sup_{\tau\in C_1}\E V_{\pi_\tau}$.
	\end{theorem}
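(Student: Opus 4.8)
The plan is to prove \Cref{backwarddiscrete} by downward induction on $k$, following the classical finite-horizon backward induction argument of \cite{ChoRobSieg71} specialised to the fully described random vector $X=(V_{\pi_1},\ldots,V_{\pi_{n+1}})$, whose law places the constant $a$ in a uniformly random slot among the $n+1$ positions and fills the remaining slots with iid copies of $V$. Since $\mathcal{S}_n$ is finite and every realisation has positive probability, each $\mathcal{F}_k$ is purely atomic, generated by the cylinders $\{X_1=x_1,\ldots,X_k=x_k\}$; I would carry out all conditioning on these atoms, so that the conditional expectations are elementary and no delicate measure theory is needed.

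First I would dispatch the admissibility claim $s_k\in C_k$. By construction $s_k\geq k$, and $s_k\leq n+1$ because $\gamma_{n+1}(x_1,\ldots,x_{n+1})=x_{n+1}$ makes the defining infimum range over a nonempty set. Each $\gamma_l$ is a deterministic, hence $\mathcal{F}_l$-measurable, function of $(x_1,\ldots,x_l)$, so $\{s_k=l\}=\bigcap_{j=k}^{l-1}\{\gamma_j(X_1,\ldots,X_j)>X_j\}\cap\{\gamma_l(X_1,\ldots,X_l)=X_l\}\in\mathcal{F}_l$, whence $s_k$ is a stopping time. I would also record, for use below, that on the atom $\{X_1=x_1,\ldots,X_k=x_k\}$ one of two things happens: either $\gamma_k(x_1,\ldots,x_k)=x_k$ and then $s_k=k$, or $\gamma_k(x_1,\ldots,x_k)=\overline\gamma_k(x_1,\ldots,x_k)>x_k$ and then $s_k=s_{k+1}$, because the rule started at time $k$ rejects $x_k$ and thereafter proceeds exactly as the rule started at $k+1$.

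The value identity $\E(X_{s_k}\mid X_1=x_1,\ldots,X_k=x_k)=\gamma_k(x_1,\ldots,x_k)$ then follows by downward induction, with trivial base case $k=n+1$ (there $s_{n+1}=n+1$ and $\gamma_{n+1}(x_1,\ldots,x_{n+1})=x_{n+1}$). For the inductive step, on an atom with $s_k=k$ the left side is $x_k=\gamma_k(x_1,\ldots,x_k)$; on an atom with $s_k=s_{k+1}>k$ I would condition further on $X_{k+1}$, use the inductive hypothesis on the atom extended by $X_{k+1}$ to rewrite $\E(X_{s_{k+1}}\mid X_1=x_1,\ldots,X_{k+1})$ as $\gamma_{k+1}(x_1,\ldots,x_k,X_{k+1})$, and then take the conditional expectation over $X_{k+1}$ to obtain $\overline\gamma_k(x_1,\ldots,x_k)=\gamma_k(x_1,\ldots,x_k)$. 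The optimality bound is proved analogously: given $\tau\in C_k$, the event $\{\tau=k\}\in\mathcal{F}_k$ is a union of atoms, so on each atom either $\tau=k$, giving $\E(X_\tau\mid\text{atom})=x_k\leq\gamma_k(x_1,\ldots,x_k)$, or $\tau\geq k+1$, in which case $\tau$ restricted to each extended atom lies in $C_{k+1}$, and conditioning on $X_{k+1}$ plus the inductive hypothesis gives $\E(X_\tau\mid\text{atom})\leq\E[\gamma_{k+1}(x_1,\ldots,x_k,X_{k+1})\mid\text{atom}]=\overline\gamma_k(x_1,\ldots,x_k)\leq\gamma_k(x_1,\ldots,x_k)$. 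Averaging the conditional equality and inequality over the atoms via the law of total expectation yields $\E X_{s_k}=\E\gamma_k(X_1,\ldots,X_k)\geq\E X_\tau$; specialising to $k=1$, where $\gamma_1$ is a function of $X_1$ alone, and recalling $X_T=V_{\pi_T}$ and $X_\tau=V_{\pi_\tau}$, gives the last two displays and $\E\gamma_1(X_1)=\sup_{\tau\in C_1}\E V_{\pi_\tau}$.

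The only step that is not purely mechanical is the treatment of an arbitrary rule $\tau\in C_k$ on an atom: one must invoke the atomicity of $\mathcal{F}_k$ to see that conditionally $\tau$ either stops at $k$ or does not, and then argue that on $\{\tau\geq k+1\}$ the rule $\tau$ restricts, atom by atom, to a genuine element of $C_{k+1}$ to which the induction applies. Everything else — measurability of the $\gamma_l$, the tower property, and the identity $s_k=s_{k+1}$ on $\{s_k>k\}$ — is routine once the discrete, atomic set-up is fixed, and I expect this to be the decisive simplification that keeps the whole proof short.
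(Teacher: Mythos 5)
Your proposal is correct and is essentially the argument the paper relies on: the paper does not write out a proof of \Cref{backwarddiscrete}, but defers to the classical backward-induction theorem of Chow--Robbins--Siegmund adapted to the random-order filtration (\Cref{backwardcomp} in the appendix, ``the proof remains the same''), and that proof is exactly the downward induction you carry out, made elementary here by the atomicity of $\mathcal{F}_k$ noted in \Cref{application}. Your one non-mechanical step---turning $\tau\in C_k$ on $\{\tau\geq k+1\}$ into a genuine element of $C_{k+1}$ (e.g.\ via $\max\{\tau,k+1\}$, which agrees with $\tau$ there) before invoking the inductive hypothesis---is the standard fix and poses no gap.
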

	The nested maxima in the definition of the future expectations $\overline{\gamma}_k(x_1,\ldots,x_k)$ imply monotonicity of these thresholds.
	\begin{remark}
		\label{mono}
		For every $x \in \mathcal{S}_n$ fixed, we have that for every $k \in [n-1]$, $\overline{\gamma}_k(x_1,\ldots,x_k)\geq\overline{\gamma}_{k+1}(x_1,\ldots,x_{k+1})$.
	\end{remark}
    Recall that Instance \ref{instance} has the following property: with undisclosed RO the stopping rule, which at any time $k\in[n+1]$ is not given information about the past values of the random ordering $\pi_1,\ldots,\pi_k$, will nonetheless be able to infer, step by step, whether the values probed came from $V$ or not, by simply observing if $a$ has already been probed or not. Moreover, from the fact that
		\[\Prob(\pi_{k+1}=n+1|X_1=x_1,\ldots,X_k=x_k)=\begin{cases}0,& \exists\,1\leq i\leq k:\,x_i=a\\\frac{1}{n-k+1},&\text{otherwise,}\end{cases}\]
    since $\{\pi_{k+1}=n+1\}=\{X_{k+1}=a\}$, we can compute the quantities in \Cref{backwarddiscrete} explicitly.
    
	\begin{remark}\label{backward}
		For Instance \ref{instance}, $\overline{\gamma}_n(x_1,\ldots,x_n)=\E V= \frac{1+bp}{n}$, if there exists $1\leq i\leq n$ such that $x_i= a$, otherwise it equals $a$. 
		For all $1\leq k< n$, 
		\[\overline{\gamma}_k(x_1,\ldots,x_k)=\begin{cases}\E\gamma_{k+1}(x_1,\ldots,x_k,V),& \exists\,1\leq i\leq k :\,x_i=a\\\frac{\gamma_{k+1}(x_1,\ldots,x_k,a)}{n-k+1}+\left(1-\frac{1}{n-k+1}\right)\E\gamma_{k+1}(x_1,\ldots,x_k,V),&\text{otherwise,}\end{cases}
		\]
		where $\:\E\gamma_{k+1}(x_1,\ldots,x_k,V)=\frac{\gamma_{k+1}(x_1,\ldots,x_k,n)}{n^2}+\frac{p\gamma_{k+1}(x_1,\ldots,x_k,b)}{n}+\gamma_{k+1}(x_1,\ldots,x_k,0)\left(1-\frac{p}{n}-\frac{1}{n^2}\right)$.
	\end{remark}

    
	\subsection{Acceptance times}\label{acceptance}
	It is well known that there is no loss of generality in \Cref{backwarddiscrete} considering only nonrandomised stopping rules and that computing directly a closed form for the expected value of the optimal algorithm $\E V_{\pi_T}$ is usually infeasible.\footnote{The unfamiliar reader is referred to \Cref{suppbackward} for foundational details on this aspects.} Nonetheless, \Cref{backwarddiscrete,backward} enable us to infer key asymptotic features of $T$, yielding nearly tight estimates of $\E V_{\pi_T}$. We will characterise $T$ in terms of \emph{acceptance times}: for each value $0$, $a$, $b$, $n$, there is a time before which the value is never accepted and such that, from that time onward, the value would always be accepted.\footnote{\textit{Accepting} a value means that, if the value is probed, the algorithm stops and the value is the reward. Acceptance times are well defined by \Cref{mono}.} 
	The acceptance times for the values $0$ and $n$ are trivial: if $n$ is probed, the optimal algorithm always stops; if $0$ is probed, it does not stop unless it is in the last step.
	\begin{remark}\label{trivial}
    	For all $x\in \mathcal{S}_n$ and $k\in[n]$, we have $0<\overline{\gamma}_k(x_1,\ldots,x_k)\leq n$.
    	Thus the acceptance time for $n$ is $1$ and the acceptance time for $0$ is $n+1$. 
	\end{remark}

    The acceptance times for the values $a$ and $b$ are nontrivial. However, the earliest time that $b$ may be accepted is random
    . Therefore, we define two acceptance times for the value $b$, depending on whether the value $a$ has already been probed or not. 
    To define the acceptance times for $a$ and $b$, we exploit the fact that the thresholds are deterministic, given the information regarding whether $a$ has already been revealed or not. For $i \in [n]$, denote the event that $a$ is probed at time $i$ by $\Omega_i \defeq \{ \pi_i = n + 1 \} = \{ X_i = a \}$, and denote the corresponding conditional expectation $\E_i(\cdot) \defeq \E(\,\cdot \mid \Omega_i)$ and conditional probability $\Prob_i(\cdot)\defeq \Prob(\,\cdot \mid \Omega_i)$. 
    For every $k \in [n + 1]$, denote the event that the value $a$ is probed by time $k$ as
    $ \Pi_k \defeq \bigcup_{i\in[k]} \Omega_i  = \{ \exists\, 1 \leq i \leq k : \,\pi_i = n + 1 \}$.
    An immediate consequence of \Cref{backward} is the following. 
	\begin{remark}\label{conditionalindep}
		For every $k\in [n]$ and for any $x\in\mathcal{S}_n$, $\overline{\gamma}_k(x_1,\ldots, x_k)$ is determined by the the information regarding $\Pi_k$, that is by the time step $k$ and whether there is $i\in[k]$ such that $x_i=a$ or not.
	\end{remark}

   	By \Cref{conditionalindep}, for Instance \ref{instance}, the future rewards can be computed solely based on the time stage $k$ and whether the value $a$ has been probed or not. For every $k \in [n]$ and $x\in\mathcal{S}_n$, they are characterised respectively by
	\begin{align*}
        \phi_k^{n+1}
            &\defeq \E( V_{\pi_T} \mid T > k, \Pi_k)
            = \overline{\gamma}_k(x_1, \ldots, x_i = a, \ldots, x_k) \\
        \bar{\phi}_k^{n+1}
            &\defeq \E( V_{\pi_T} \mid T > k, \Pi_k^c)
            =\overline{\gamma}_k(x_1 \neq a, \ldots, x_k \neq a) \,.
	\end{align*} 
	We omit the time horizon, so that  we denote the expected future reward when $a$ has been already probed at time $k$ by $\phi_k$ and the expected future reward when $a$ has not been already probed yet at time $k$ by $\bar{\phi}_k$. Furthermore, we will denote $x\vee y\defeq\max\{x,y\}$.
	\begin{remark}
    \label{backwardphi}
        \Cref{backward} is equivalent to $\phi_n = \E V= \frac{1+bp}{n}$, $\bar{\phi}_n = a$, and for all $k \in [n-1]$,
		\begin{align*}
			\phi_k 
                &= \E(V\vee\phi_{k+1}) \,, \\
			\bar{\phi}_k 
                &= \frac{a\vee\phi_{k+1}}{n+1-k}
                    +\left(1-\frac{1}{n+1-k}\right) \E(V\vee\bar{\phi}_{k+1}) \,.
		\end{align*}
	\end{remark}
 	Finally, we define formally the acceptance times in terms of $\phi_k$ and $\bar{\phi}_k$.
	\begin{definition}\label{defj}
		Denote the earliest (deterministic) time that the value $a$ would be accepted, if probed, by $j_n \defeq \inf \{ k \in [n+1] : \: a \geq \phi_k \}$.
	\end{definition}
 
    
    
	\begin{definition}\label{defk}
        Denote the earliest (deterministic) time that the value $b$ would be accepted if probed, given that the value $a$ has already been probed, by $k_n \defeq \inf \{ k \in [n + 1] :\: b \geq \phi_k \}$.
	\end{definition}


	\begin{definition}
    \label{defbark}
        Denote the earliest (deterministic) time that the value $b$ would be accepted if probed, given that the value $a$ has not been probed, by $\bar{k}_n \defeq \inf \{ k \in [n + 1] :\: b \geq \bar{\phi}_k \}$.
	\end{definition}
	
    Note that, for all large enough $n$, we have that $j_n,\,k_n,\,\bar{k}_n \le n$ since $\phi_n \leq a<b$ and, at time $n$, not having probed the value $a$ implies that $x_{n+1} = a=\bar{\phi}_n$.

    \subsection{Asymptotic estimates for the acceptance times}
    \label{asymp}
    
    In this section we derive sharp asymptotic estimates on $j_n$, $k_n$, and the relative order of $j_n$, $k_n$ and $\bar{k}_n$. The proof is quite technical and involved, details are provided in \Cref{suppasymp}. Here we discuss its high-level ideas, and provide a graphical representation for the hard instance in \Cref{prophet}. All logarithms refer to the natural logarithm. 
    
    It is important to keep in mind, from the previous section, that: for all times after $j_n$, the optimal stopping rule will always accept the value $a$ if probed; if the value $a$ has already been probed, then, for all times after $k_n$, the optimal stopping rule will accept the value $b$ if probed; if the value $a$ has not yet been probed, then, for all times after $\bar{k}_n$, the optimal stopping rule will accept the value $b$ if probed.
	\begin{lemma}\label{accept}
		As $n\longrightarrow\infty$, the optimal stopping rule for Instance \ref{instance} is such that we have the following.
		\begin{enumerate}[a)]
			\item$ j_n \sim n\left(1+\frac{1}{p}\log\frac{1+(b-a)p}{1+bp}\right).$
			\item$k_n\sim n\left(1+\frac{1}{p}\log\frac{1}{1+bp}\right).$
			\item $k_n \le j_n$. Informally, the gambler accepts the value $a$ later than the value $b$ after seeing the value $a$.
			\item$k_n \le \bar{k}_n$. Informally, the gambler accepts the value $b$ not having seen the value $a$ later than the value $b$ after seeing the value $a$.
			\item$\bar{k}_n \le j_n$. Informally, the gambler accepts 
            the value $a$
            later than 
            the value $b$ not having seen the value $a$.
		\end{enumerate}
	\end{lemma}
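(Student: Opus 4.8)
The analysis starts from the two recursions for $\phi_k$ and $\bar{\phi}_k$ in \Cref{backwardphi}. The key structural fact is that the $\phi$-recursion \emph{linearises}: whenever $\phi_{k+1}\le b$ one has $\E(V\vee\phi_{k+1})=\tfrac1n+\tfrac{pb}{n}+\bigl(1-\tfrac pn-\tfrac1{n^2}\bigr)\phi_{k+1}$, so for every $k$ with $\phi_k\le b$ (equivalently, by \Cref{mono}, $k\ge k_n$) the recursion is exactly affine with terminal datum $\phi_n=\tfrac{1+bp}{n}$. Writing $r\defeq 1-\tfrac pn-\tfrac1{n^2}$ and $\phi^{*}\defeq\tfrac{(1+bp)n}{pn+1}$ for its fixed point, this gives the closed form $\phi_k=\phi^{*}-(\phi^{*}-\phi_n)r^{\,n-k}$ for all $k\ge k_n-1$. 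Since $r^{\,m}=e^{-pm/n}(1+o(1))$ uniformly over $m\le n$ and $\phi^{*}\to\tfrac{1+bp}{p}$, on this range $\phi_k=g(k/n)+o(1)$ uniformly, where $g(t)\defeq\tfrac{1+bp}{p}\bigl(1-e^{-p(1-t)}\bigr)$ is continuous, strictly decreasing, with $g(1)=0$; below $k_n$ the other branch $\phi_k=\tfrac1n+(1-\tfrac1{n^2})\phi_{k+1}$ only keeps $\phi_k$ above $b$, which is all that range will be used for.

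Parts (a) and (b) then follow by locating where $\phi$ crosses $b$ and $a$. From the closed form, at $k=k_n$ one has $\phi_{k_n}\le b<\phi_{k_n-1}$, so $r^{\,n-k_n}$ is pinned, up to a factor $r^{\pm1}$, to $\tfrac{\phi^{*}-b}{\phi^{*}-\phi_n}\to\tfrac1{1+bp}$; taking logarithms and dividing by $\log r=-\tfrac pn(1+o(1))$ gives $n-k_n=\tfrac np\log(1+bp)+o(n)$, i.e.\ (b): $k_n\sim n\bigl(1+\tfrac1p\log\tfrac1{1+bp}\bigr)$. The same argument with $a$ in place of $b$, using $\tfrac{\phi^{*}-a}{\phi^{*}-\phi_n}\to\tfrac{1+(b-a)p}{1+bp}$, gives (a): $j_n\sim n\bigl(1+\tfrac1p\log\tfrac{1+(b-a)p}{1+bp}\bigr)$; both limiting coefficients are positive, which is exactly where the hypothesis $\log(1+pb)<p$ is spent.

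Part (c) requires no asymptotics: since $a<b$, $\{k:\phi_k\le a\}\subseteq\{k:\phi_k\le b\}$, so $j_n=\inf\{k:\phi_k\le a\}\ge\inf\{k:\phi_k\le b\}=k_n$. For (d) and (e) one must control $\bar{\phi}_k$. Part (d) I would get from the comparison $\bar{\phi}_k\ge\phi_k$ for all $k$ and all large $n$, proved by backward induction on $D_k\defeq\bar{\phi}_k-\phi_k$ ($D_n=a-\tfrac{1+bp}{n}>0$): writing $\bar{\phi}_k=\tfrac{a\vee\phi_{k+1}}{n+1-k}+\bigl(1-\tfrac1{n+1-k}\bigr)\E(V\vee\bar{\phi}_{k+1})$ and using that $x\mapsto\E(V\vee x)$ is nondecreasing and $1$-Lipschitz (so $\E(V\vee\bar{\phi}_{k+1})\ge\E(V\vee\phi_{k+1})=\phi_k$ under the hypothesis), for $k\ge j_n$ one has $\phi_{k+1}\le a$, so $a\vee\phi_{k+1}=a\ge\phi_k$ and the combination is a sum of two nonnegative terms, giving $D_k\ge0$ outright; for $k<j_n$ the first term can be negative but only by $O(1/n)$, so $D_k\ge(1-O(1/n))D_{k+1}-O\bigl(\tfrac1{n(n+1-k)}\bigr)$, and since $D_{j_n}$ is bounded below by a positive constant while the accumulated loss is $o(1)$, $D_k$ stays positive. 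Given $\bar{\phi}_k\ge\phi_k$, we get $\{k:\bar{\phi}_k\le b\}\subseteq\{k:\phi_k\le b\}$, hence $\bar{k}_n\ge k_n$. For (e), since $\bar{\phi}$ is non-increasing in $k$ it suffices to show $\bar{\phi}_{j_n}\le b$. On $k\ge j_n$ one has $\phi_{k+1}\le a$, so $a\vee\phi_{k+1}=a$ and the $\bar{\phi}$-recursion has continuum limit $h(s)$, with $s\defeq(n-k)/n$, solving the linear ODE $h'(s)+(p+\tfrac1s)h(s)=1+pb+\tfrac as$ with $h(0^{+})=a$, whose solution is $h(s)=\tfrac{1+bp}{p}-\tfrac{1+(b-a)p}{p}\cdot\tfrac{1-e^{-ps}}{ps}$. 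Evaluating at $s_j\defeq 1-\alpha=\tfrac1p\log\tfrac{1+bp}{1+(b-a)p}$ (the rescaled time of $j_n$, at which $e^{-ps_j}=\tfrac{1+(b-a)p}{1+bp}$) and using $\tfrac{1+bp}{p}-b=\tfrac1p$, the inequality $h(s_j)\le b$ collapses to $s_j\le\tfrac{a(1+(b-a)p)}{1+bp}$, which rearranges precisely to condition~\ref{conditionI}. So condition~\ref{conditionI} gives $\bar{\phi}_{j_n}\le b$, i.e.\ (e): $\bar{k}_n\le j_n$.

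The main obstacle is the $\bar{\phi}$-analysis behind (e) (and the quantitative bookkeeping in (d)): unlike the $\phi$-recursion it is non-autonomous because of the $\tfrac1{n+1-k}$ weights, and it changes functional form at the deterministic, $a,b,p$-dependent times where $\phi_{k+1}$ crosses $a$ and where $\bar{\phi}_{k+1}$ crosses $b$; producing a continuum limit that is uniformly valid across these transitions and verifying that the solution stays below $b$ up to time $j_n$ (so that the ODE one solved is the right one) is where the remaining hypotheses~\ref{conditionII}--\ref{conditionV} are invoked and where the technical weight of \Cref{suppasymp} lies. Once the affine closed form for $\phi_k$ is in hand, parts (a)--(c) are by comparison routine.
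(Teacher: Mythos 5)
Your treatment of parts (a)--(c) is essentially the paper's: you solve the linearised $\phi$-recursion in closed form and locate the crossings of $b$ and $a$ (the paper does the same by induction on \Cref{backwardphi}), and your set-inclusion argument for (c) is a harmless simplification. The substance of \Cref{accept} lies in (d) and (e), and there your proposal, while pointing at a genuinely different and potentially sharper route, has real gaps.

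For (d), the whole argument rests on the unproved assertion that $D_{j_n}=\bar{\phi}_{j_n}-\phi_{j_n}$ is bounded below by a positive constant. The backward induction you set up does not deliver this: for $k\geq j_n$ it only gives $D_k\geq 0$, and if you try to propagate the base value $D_n\approx a$ multiplicatively, the factors $\prod_{l}\bigl(1-\tfrac{1}{n+1-l}\bigr)$ telescope to $\tfrac{1}{n+1-j_n}=O(1/n)$, so you only get $D_{j_n}\gtrsim a/n$. To obtain a constant you must sum the inhomogeneous terms $\tfrac{a-\phi_k}{n+1-k}$ over $j_n\leq k\leq n$, using the asymptotics of $\phi_k$ to know $a-\phi_k$ is of constant order on most of that range --- a quantitative iteration of exactly the kind the paper carries out (its Steps 1--3 for (d), closed with Condition \ref{conditionV}). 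Without it, below $j_n$ you only get $D_k\geq -O(1/n)$, hence $\phi_{\bar{k}_n}\leq b+O(1/n)$, which pins down $k_n\leq\bar{k}_n$ only up to an additive $O(1)$ shift; the exact inequality is what the case analysis behind \Cref{optimal} uses, so this is not a cosmetic loss.

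For (e), the ODE you integrate is the one with $b\vee\bar{\phi}=b$, i.e.\ it describes the dynamics only on the range where $\bar{\phi}_k\leq b$ --- which is essentially the conclusion you are trying to reach ($\bar{k}_n\leq j_n$). Making this non-circular requires a first-crossing or contradiction set-up (as in the paper's proof of (e)), together with a rigorous discrete-to-continuum error bound, including near the singular weight $\tfrac{1}{s}$ at $s\to 0^+$; and since Condition \ref{conditionI} is stated with ``$\leq$'', you must also check that the $o(1)$ errors and the $o(1)$ discrepancy between $j_n/n$ and its limit do not flip the inequality. You acknowledge all of this in your closing paragraph, so as written (e) is a plan rather than a proof. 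It is worth noting that, if completed, your route would differ from the paper's in an interesting way: your exact limit profile yields the criterion equivalent to Condition \ref{conditionI} for (e), whereas the paper uses the cruder bound $\bar{\phi}_k\leq a+\tfrac{n-k}{2}\cdot\tfrac{1+(b-a)p}{n}$ and Conditions \ref{conditionII}--\ref{conditionIII}, and your comparison $\bar{\phi}_k\geq\phi_k$ for (d) would bypass Condition \ref{conditionV}. But both of the steps where the paper spends its technical effort are precisely the ones your proposal leaves open.
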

    \begin{proof}[Idea of the proof]
        \begin{description}[style=unboxed, leftmargin=0cm]
        \item ~
            \begin{enumerate}[a)]
			\item By iterating the formulas in \Cref{backwardphi} we obtain a close expression for $\{\phi_k\}$,  for all $ k\leq n$:
			\begin{equation*}
				\phi_k=\frac{1+bp}{p}\left[1-\left(1-\frac{p}{n}\right)^{n-k+1}\right]+\mathcal{O}\left(\frac{1}{n}\right),
			\end{equation*} 
			which can be turned into a sharp asymptotics for $j_n$ as $n\longrightarrow\infty$, since this is the smallest $k$ such that $a\geq\phi_k$. Given the above expression, this equation is in fact recast as 
			\begin{equation*}
				a\geq \frac{1+bp}{p}\left[1-\left(1-\frac{p}{n}\right)^{n-i+1}\right]+\mathcal{O}\left(\frac{1}{n}\right).
			\end{equation*}
			Asymptotically, this yields the claim, since 
				\[j_n=\bigg\lceil n\left[1+\frac{1}{p}\log\left(\frac{1+(b-a)p}{1+bp}\right)\right]+\mathcal{O}\left(1\right)\bigg\rceil.\]
				
			\item Since $k_n$ is the smallest $k$ such that $b\geq\phi_k$, the asymptotics for $k_n$ is obtained similarly to the one for $j_n$ in \Cref{accept} (a).
			
			\item Follows from the fact that, as $n\longrightarrow\infty$, the limit of $\sfrac{j_n}{n}$ obtained in \Cref{accept} (a) is larger than the limit of $\sfrac{k_n}{n}$ obtained in \Cref{accept} (b) since $a<b$.
			
			\item Assume by contradiction that $\bar{k}_n<k_n$ infinitely often as $n\longrightarrow\infty$. For all such $n$ this implies that $\bar{\phi}_{k_n-1}\leq b$. We will reach a contradiction with this fact as follows. First, through \Cref{backwardphi}, by induction we derive the iterative lower bound on $\bar{\phi}_k$ for all $k \ge k_n-1$, for all such $n$, ensuring that
			\begin{equation*}
				\bar{\phi}_k\ge a+\left(\frac{1+(b-a)p}{n}-\frac{a}{n^2}\right)\sum_{j=0}^{n-k-1}\frac{n-k-j}{n+1-k}\left(1-\frac{p}{n}-\frac{1}{n^2}\right)^j.
			\end{equation*} 
			For $k=k_n-1$ we apply \Cref{accept} (b) to the above, which yields, upon asymptotically estimating the summation terms, to establishing that
			\begin{align*}
				\bar{\phi}_{k_n-1}\geq b+\frac{1}{p}-\frac{b[1+(b-a)p]}{(1+pb)\log(1+pb)}+\smallO(1).
			\end{align*}
			By Condition \ref{conditionV},  this implies $\bar{\phi}_{k_n}>b$ as $n\longrightarrow\infty$, which yields the contradiction sought.
			
			\item  Assume by contradiction that $j_n < \bar{k}_n$ infinitely often as $n \longrightarrow \infty$. We first derive, through \Cref{backwardphi}, an iterative upper bound on $\{\bar{\phi}_k\}$ for all $k \ge j_n$, for all such $n$. The derivation is a little involved. Since by assumption for all $\bar{k}_n\le k\le n$, $\bar{\phi}_k\leq b$ and $\phi_k\leq a$, and for all $j_n\le k<\bar{k}_n$, $\bar{\phi}_k\geq b$ and $\phi_k\geq a$, by an induction argument split into the two segments, for all $j_n\leq k\leq n-1$, we have
			\begin{equation*}
				\bar{\phi}_{k}\leq a+\frac{n-k}{2}\frac{1+(b-a)p}{n}.
			\end{equation*} 
			Then we consider the earliest time that $b$ is greater or equal to this upper bound, and denote it as $k_n^*$. The definition can be recast as 
			\[k_n^*=\left\lceil n\frac{1-(2-p)(b-a)}{1+p(b-a)} \right\rceil\]and thus, as $n\longrightarrow\infty$, it holds that
            \begin{itemize}[noitemsep]
            \item $\bar{k}_n \leq k_n^*$. 
            \item $k_n^*\sim \gamma n$, where by Condition \ref{conditionII} and $a<b$,
            \[
           \gamma = \frac{1-(2-p)(b-a)}{1+p(b-a)}\in(0,1).
            \]
            \end{itemize} 
            We show that by \Cref{accept} (a) this value of $\gamma$ implies that $k_n^* < j_n$ for all sufficiently large $n$. This is a contradiction with the hypothesis that $j_n < \bar{k}_n \leq k^*_n$ for infinitely many values of $n$.
		\end{enumerate}
        \end{description}
    \end{proof}
    
    \begin{figure}
        \centering
        \begin{subfigure}[b]{0.49\textwidth}
             \centering
             \includegraphics[width=\textwidth]{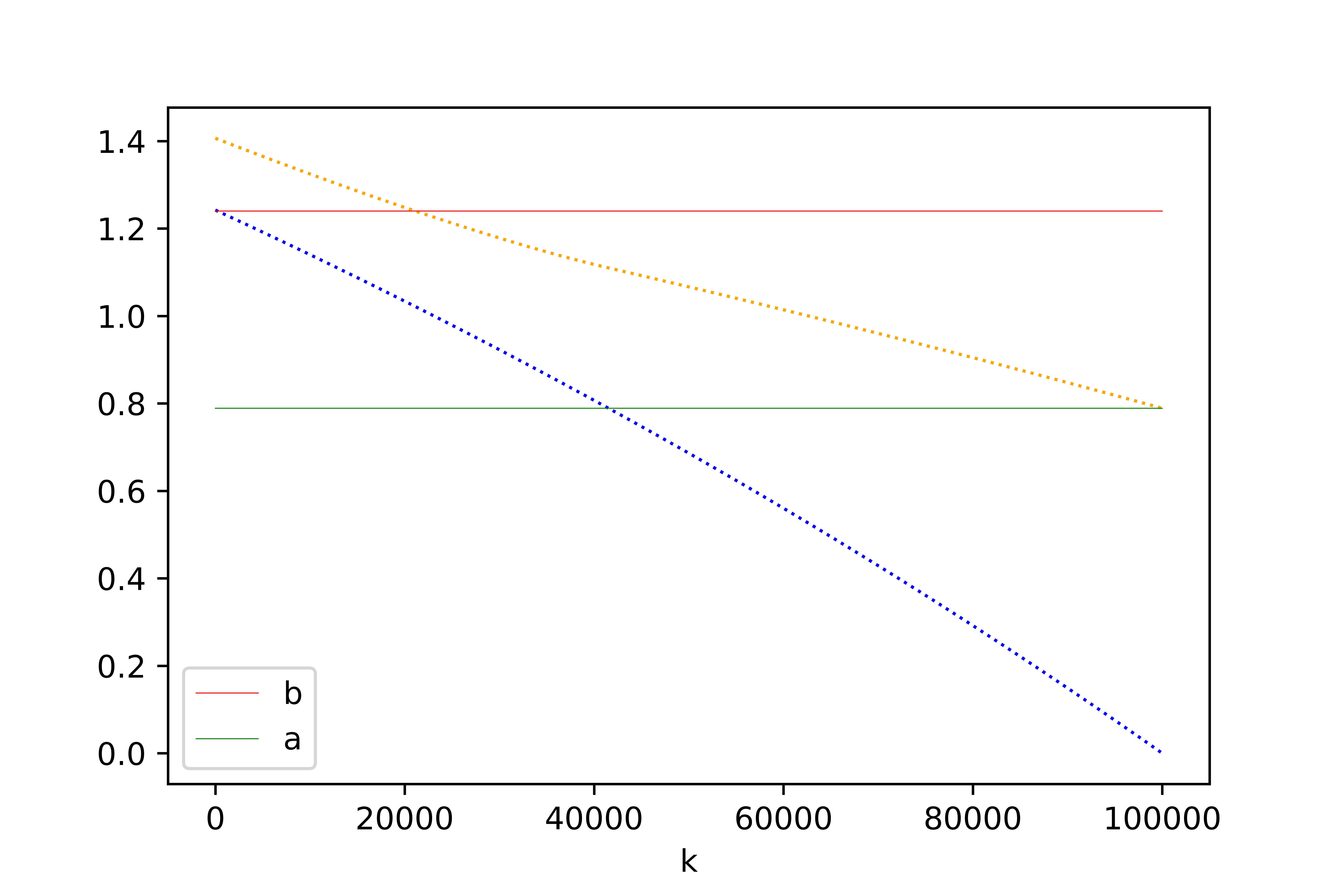}
             \caption{Future rewards $\{\phi_k\}$ (blue) and $\{\bar{\phi}_k\}$ (amber)}
             \label{fig: prophet full}
         \end{subfigure}
        \begin{subfigure}[b]{0.49\textwidth}
             \centering
            \includegraphics[width=\textwidth]{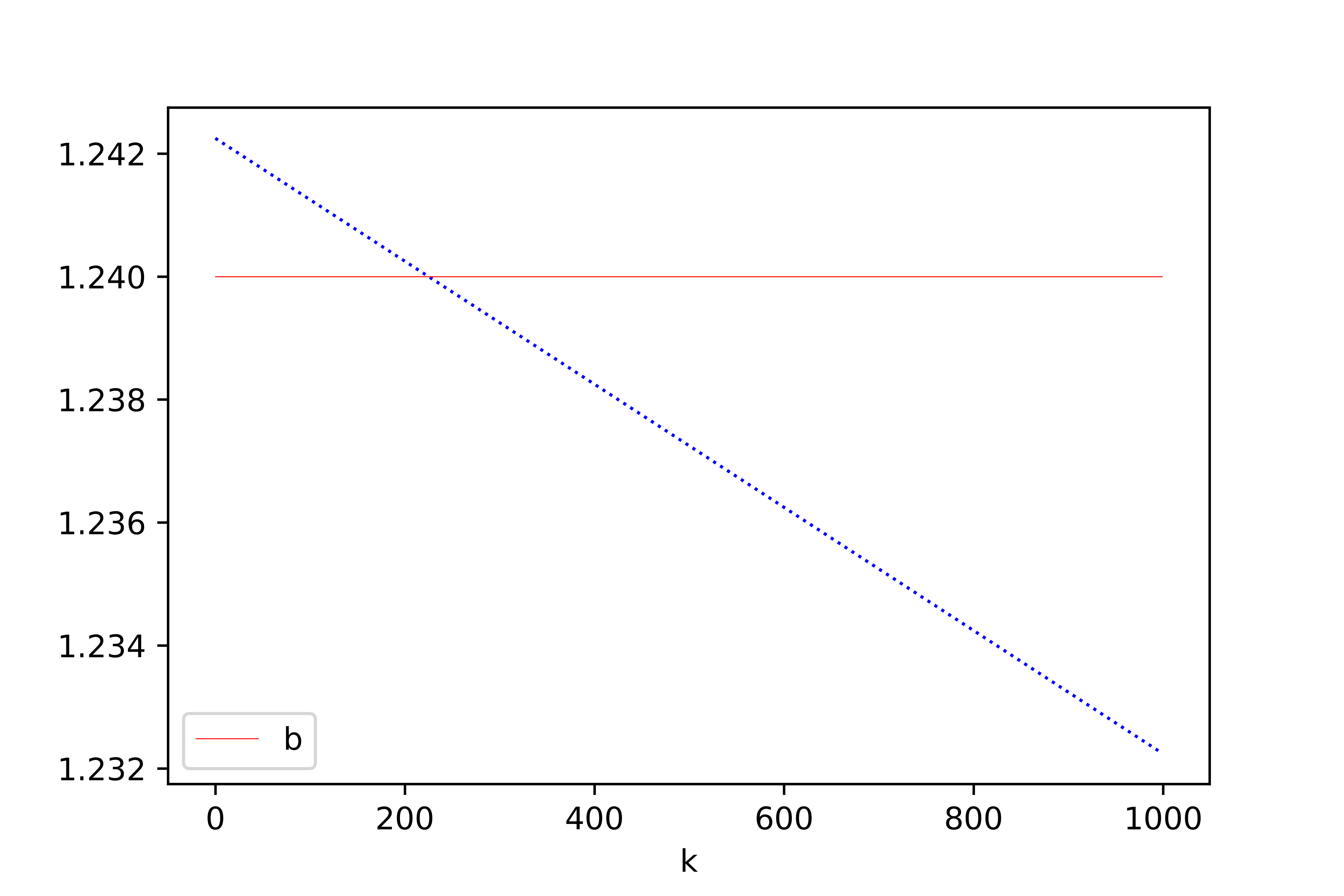}
             \caption{Zoom in at the intersection $\phi_k \approx b$}
             \label{fig: prophet zoom}
         \end{subfigure}
        \caption{Simulation of the dynamic program (reference to the code shared is in \Cref{code}) for Instance \ref{instance} with $a=\aval$, $b=\bval$, $p=\pval$, $n=10^6$. 
        \Cref{fig: prophet full} shows the sequences $\{\phi_k\}$ (blue), $\{\bar{\phi}_k\}$ (amber), and the values $a$ (green), $b$ (red).  
        \Cref{fig: prophet zoom} shows a zoom on the intersection where $\phi_k \approx b$. 
        Informally, the abscissa of the intercept of the blue dotted curve with the red line corresponds to the smallest acceptance time $k_{10^6}\approx2253$; 
        the abscissa of the intercept of the amber dotted curve with the red line is the second largest acceptance time $\bar{k}_{10^6}\approx211231$; the abscissa of the intercept of the blue dotted curve with the green line is the largest acceptance time $j_{10^6}\approx415187$, as per \Cref{accept} (c, d, e).}
        \label{prophet}
    \end{figure}
    
	\section{Random Order is $\CR$-hard}\label{hardness}
	We start by calculating a sharp asymptotic estimate for the prophet's expectation. The elementary computation is provided in \Cref{suppmax}.
	\begin{lemma}\label{max} 
		For Instance \ref{instance} as $n\longrightarrow\infty$,
		$\E\max_{i\in[n+1]}V_i=1+b(1-e^{-p})+ae^{-p}+\mathcal{O}\left(\sfrac{1}{n}\right)$.
	\end{lemma}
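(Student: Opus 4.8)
The plan is to compute $\E\max_{i\in[n+1]}V_i$ directly from the distribution, by conditioning on how many of the $n$ iid copies of $V$ take the value $n$, the value $b$, or $0$, and treating the deterministic last coordinate $a$ separately. Since $a<1<b<n$, the maximum of the whole instance is $n$ if at least one copy equals $n$; otherwise it is $b$ if at least one copy equals $b$; otherwise it is $\max\{a,0\}=a$. Writing $N=\#\{i\le n:V_i=n\}$ and $B=\#\{i\le n:V_i=b\}$, we get
\[
\E\max_{i\in[n+1]}V_i
= n\,\Prob(N\ge 1)
+ b\,\Prob(N=0,\,B\ge 1)
+ a\,\Prob(N=0,\,B=0).
\]

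Next I would estimate each of the three probabilities asymptotically. Since $N\sim\mathrm{Bin}(n,1/n^2)$, we have $\Prob(N\ge 1)=1-(1-1/n^2)^n = 1/n + \mathcal{O}(1/n^2)$, so the first term contributes $n\cdot(1/n+\mathcal{O}(1/n^2)) = 1 + \mathcal{O}(1/n)$ — this is where the leading constant $1$ comes from. For the other two terms, I would condition on $N=0$, which has probability $1-\mathcal{O}(1/n)$, and on this event each of the $n$ variables is independently $b$ with probability $\frac{p/n}{1-1/n^2}=\frac{p}{n}+\mathcal{O}(1/n^2)$ and $0$ otherwise. Hence $\Prob(B=0\mid N=0)=\bigl(1-\tfrac{p}{n}+\mathcal{O}(1/n^2)\bigr)^n = e^{-p}+\mathcal{O}(1/n)$ and $\Prob(B\ge 1\mid N=0) = 1-e^{-p}+\mathcal{O}(1/n)$. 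Multiplying by $\Prob(N=0)=1+\mathcal{O}(1/n)$ and combining gives the $b(1-e^{-p})$ and $ae^{-p}$ terms, with all error terms absorbed into a single $\mathcal{O}(1/n)$.

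The only mild subtlety, rather than a genuine obstacle, is bookkeeping the $\mathcal{O}(1/n)$ errors: one must check that the $n\,\Prob(N\ge 1)$ term, where a quantity of order $n$ multiplies a probability, still yields an additive $\mathcal{O}(1/n)$ error and not something larger — this works because $\Prob(N\ge1) = 1/n - \binom{n}{2}/n^4 + \cdots$ has its correction term of order $1/n^3$, so $n\cdot\Prob(N\ge1) = 1 + \mathcal{O}(1/n^2)$, comfortably within the claimed bound. Similarly, replacing $\frac{p/n}{1-1/n^2}$ by $\frac{p}{n}$ inside an $n$-th power only perturbs the result by $\mathcal{O}(1/n)$, using the standard estimate $(1+x/n+\mathcal{O}(1/n^2))^n = e^x(1+\mathcal{O}(1/n))$. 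Assembling the three contributions yields
\[
\E\max_{i\in[n+1]}V_i = 1 + b(1-e^{-p}) + ae^{-p} + \mathcal{O}\!\left(\sfrac{1}{n}\right),
\]
as claimed. Since this is entirely elementary, I would relegate the detailed error tracking to the appendix (\Cref{suppmax}) and keep only the conditioning decomposition and the three limiting probabilities in the main text.
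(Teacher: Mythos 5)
Your proposal is correct and follows essentially the same route as the paper: decompose the maximum into the three cases (at least one copy equals $n$; no $n$ but at least one $b$; all copies zero, so the maximum is $a$), write the three exact probabilities, and estimate them asymptotically — your conditioning on $N=0$ is just an equivalent rearrangement of the same product formulas. One tiny slip: the correction in $\Prob(N\ge 1)=1-\left(1-\sfrac{1}{n^2}\right)^n$ is of order $\sfrac{1}{n^2}$ (namely $-\sfrac{1}{2n^2}$), not $\sfrac{1}{n^3}$, so $n\,\Prob(N\ge 1)=1+\mathcal{O}\left(\sfrac{1}{n}\right)$ rather than $1+\mathcal{O}\left(\sfrac{1}{n^2}\right)$; this is inconsequential for the claimed $\mathcal{O}\left(\sfrac{1}{n}\right)$ bound.
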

	Next we derive sharp asymptotic estimates for the expectation of the optimal algorithm. The details of the proof are provided in \Cref{suppmax}.
	\begin{proposition}\label{optimal}
		For Instance \ref{instance} as $n\longrightarrow\infty$,
		$\E V_{\pi_T}= q_{a,b,p}(\lambda_n,\mu_n,\nu_n)+\mathcal{O}\left(\sfrac{1}{n}\right)$,
		where $\lambda_n \defeq \sfrac{j_n}{n}$, $\mu_n \defeq \sfrac{k_n}{n}$, $\nu_n \defeq \sfrac{\bar{k}_n}{n}$ and $q_{a,b,p}(\lambda,\mu,\nu)$ is a multivariate exponential quadratic in the variables $\lambda$, $\mu$, $\nu$ and parameters $a$, $b$, $p$ defined as
		\begin{align*}
			q_{a,b,p}(\lambda,\mu,\nu)&=\frac{\mu^2}{2}-\frac{\nu^2}{2}+\nu+\frac{1}{p}+b+\left(\frac{1}{p}-\mu\right)\left(\frac{1}{p}+b\right)e^{p(\mu-1)}\\&+\left[\left(\frac{1}{p}+b\right)(\nu-\lambda)-\frac{a}{p}\right]e^{p(\nu-1)}-\frac{1}{p}\left(\frac{1}{p}+b-a\right)e^{p(\nu-\lambda)}.
		\end{align*}
	\end{proposition}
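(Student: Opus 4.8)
The plan is to compute $\E V_{\pi_T}$ by summing, over all the ways the optimal stopping rule can terminate with a positive reward, the product of the probability of that event and the value collected. The optimal rule is entirely characterised by the three acceptance times $j_n, k_n, \bar k_n$ together with the trivial facts that $n$ is always accepted, $0$ is accepted only at the last step, and (by \Cref{accept}) their asymptotic ordering is $k_n \le \bar k_n \le j_n$. First I would set up a case analysis on the relative position within the sequence of the step $\theta$ at which $a$ arrives (with $\theta = n+1$ meaning "$a$ arrives last'', i.e.\ $\Pi_n^c$ holds): since $a$ occupies a uniformly random slot among the $n+1$ positions, $\Prob(\theta = t) = 1/(n+1)$ for each $t$. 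Conditioned on $\theta$, the relevant thresholds along the sequence become deterministic — they are the $\phi_k$ (once $a$ has passed) and $\bar\phi_k$ (before $a$ passes) of \Cref{backwardphi} — so one can read off exactly which of the values $n, b, a, 0$ the rule will stop on depending on where, relative to $j_n$, $k_n$, $\bar k_n$, the value first appears.

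Concretely I would decompose $\E V_{\pi_T}$ into four contributions: (i) the contribution of the point mass $n$: since each of the $n$ iid coordinates independently hits $n$ with probability $1/n^2$ and the rule always stops on $n$, this contributes $n \cdot \Prob(\text{first }n \text{ appears before anything else is accepted})$, which after the standard Poissonisation (replacing $(1-p/n)^{cn}$ by $e^{-pc}$ and Binomial tails by their Poisson limits) gives a constant-order term; (ii) the contribution of stopping on $b$, split further according to whether $a$ has already been seen or not when the winning $b$ arrives — this is where the cutoffs $\mu_n = k_n/n$ and $\nu_n = \bar k_n/n$ enter, because a $b$ arriving at normalised time $s$ is accepted iff $s \ge \mu_n$ (if $a$ already seen) or $s \ge \nu_n$ (if not); (iii) the contribution of stopping on $a$ itself, which requires $a$ to arrive at normalised time $\ge \lambda_n = j_n/n$ and no earlier acceptance to have occurred, contributing a term with the factor $a\,e^{p(\nu-1)}$ roughly; and (iv) the contribution of reaching the last step and being forced to take whatever is there (which, by the $\mathcal O(1/n)$ bookkeeping, only matters through lower-order terms or through the $a$ term when $a$ is last). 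Each probability is a sum over the arrival position of the winning value of a product of "survived so far'' factors; converting these Riemann sums to integrals over $[0,1]$ against the Poisson-limit densities, and carrying out the (routine but lengthy) Gaussian-type integrals $\int e^{p s}\,ds$, $\int s\, e^{ps}\,ds$, etc., should collapse everything into the claimed closed form $q_{a,b,p}(\lambda_n,\mu_n,\nu_n)$, with all discretisation and Poissonisation errors absorbed into the $\mathcal O(1/n)$.

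The main obstacle I anticipate is the careful bookkeeping of the conditioning on the arrival slot of $a$ and its interaction with the two different $b$-thresholds: one must correctly handle the three regimes $\theta < k_n$, $k_n \le \theta < \bar k_n$, and $\bar k_n \le \theta$ (and separately $\theta \le j_n$ vs.\ $\theta > j_n$), because whether a given $b$ is accepted flips depending on whether $a$ has passed, and the position of $a$ is precisely the random quantity we are averaging over. Getting the limits of integration in each of these regimes right — and checking that the boundary terms match up so that the final expression is the single polynomial-times-exponential $q_{a,b,p}$ rather than a piecewise object — is the delicate step; the actual integration is mechanical. A secondary technical point is justifying uniformly in $k$ the $\mathcal O(1/n)$ error in the closed forms for $\phi_k$ and $\bar\phi_k$ (already asserted in the proof sketch of \Cref{accept}), so that summing $n+1$ such errors, each weighted by $1/(n+1)$, still yields a net $\mathcal O(1/n)$; this is where I would lean on the explicit iterated expressions from \Cref{backwardphi} and the fact that the per-step multiplicative factors are $1 - \Theta(1/n)$.
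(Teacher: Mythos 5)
Your proposal follows essentially the same route as the paper: conditioning on the (uniform) arrival slot of $a$ — i.e.\ the partition $\{\Omega_i\}$ — using the ordering $k_n\le\bar k_n\le j_n$ from \Cref{accept} to split into the regimes $i<k_n$, $k_n\le i<\bar k_n$, $\bar k_n\le i<j_n$, $i\ge j_n$, computing the conditional expectation of the reward in each regime via the acceptance rules for $n$, $b$, $a$, $0$, and then summing and passing to the Poisson/exponential limits to obtain $q_{a,b,p}(\lambda_n,\mu_n,\nu_n)+\mathcal{O}(1/n)$. This matches the paper's law-of-total-expectation argument with the four partial sums $S_{1,k_n-1}$, $S_{k_n,\bar k_n-1}$, $S_{\bar k_n,j_n-1}$, $S_{j_n,n+1}$, so the outline is correct and no further comparison is needed.
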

        \begin{proof}[Idea of the proof]
        In order to obtain sharp estimates of $\E V_{\pi_T}$ we rely on the eventual ordering $k_n\leq \bar{k}_n\leq j_n$ and the corresponding asymptotics derived in \Cref{accept}, exploiting the partitioning $\{\Omega_i\}$ of the probability space, via the law of total expectation. The role of \Cref{accept} in this computation can be appreciated from the following example. Consider $i<j_n$ for instance: conditionally on $\Omega_i$, the optimal stopping rule does not stop when $a$ or $0$ are probed (except for $0$, in the last step), but it stops the first time $n$ is probed, or, at certain times, when $b$ is probed. To determine when $b$ is accepted or not, knowledge of the relative position of $\bar{k}_n$ with respect to $k_n$ and $j_n$ is needed. Thanks to \Cref{accept}, on top knowing exact asymptotics, we avoid a lengthy case analysis, since only one ordering is possible: $k_n\leq \bar{k}_n\leq j_n$. We start by finding the distribution of $(V_{\pi_T}|\Omega_i)$ for all $i\in[n+1]$ (the abuse of notation denotes the conditional distribution of $V_{\pi_T}$ given $\Omega_i$), so as to determine, for all $n$ large enough, $\E_iV_{\pi_T}$ for all $i\in[n+1]$. Recall that, conditionally on $\Omega_i$, at every step of the process other than the $i$th, independently, the algorithm could probe $n$, $b$ and $0$, with probabilities $\sfrac{1}{n^2}$, $\sfrac{p}{n}$ and $1-\sfrac{p}{n}-\sfrac{1}{n^2}$ respectively. While $n$ is always accepted, $0$ is never accepted, until the end, by the optimal algorithm. As $n\longrightarrow\infty$,
        \begin{enumerate}[i), noitemsep]
			\item If $i< k_n$, then we have 
			\begin{align*}
				\E_iV_{\pi_T}&=n\left[\frac{k_n}{n^2}+\mathcal{O}\left(\frac{1}{n^2}\right)\right]+\left[1-\frac{k_n}{n^2}+\mathcal{O}\left(\frac{1}{n^2}\right)\right]\left[\left(\frac{1}{p}+b\right)\left(1-e^{p\left(\frac{k_n}{n}-1\right)}\right)+\mathcal{O}\left(\frac{1}{n}\right)\right]\\&=\frac{k_n}{n}+\left(\frac{1}{p}+b\right)\left(1-e^{p\left(\frac{k_n}{n}-1\right)}\right)+\mathcal{O}\left(\frac{1}{n}\right).
			\end{align*}
             \item If $k_n \leq i < \bar{k}_n$, then we have $\E_iV_{\pi_T}=\frac{i}{n}+\left(\frac{1}{p}+b\right)\left(1-e^{p\left(\frac{i}{n}-1\right)}\right)+\mathcal{O}\left(\frac{1}{n}\right)$.
            \item If $\bar{k}_n\leq i<j_n$, then we have $\E_iV_{\pi_T}=\frac{\bar{k}_n}{n}+\left(\frac{1}{p}+b\right)\left(1-e^{p\left(\frac{\bar{k}_n}{n}-1\right)}\right)+\mathcal{O}\left(\frac{1}{n}\right)$.
            \item If $i\geq j_n$, then we have 
            \begin{align*}
					\E_iV_{\pi_T}&=n\left[\frac{\bar{k}_n}{n^2}+\mathcal{O}\left(\frac{1}{n^2}\right)\right]+\left[1-\frac{\bar{k}_n}{n^2}+\mathcal{O}\left(\frac{1}{n^2}\right)\right]\left[\left(\frac{1}{p}+b\right)\left(1-e^{p\left(\frac{\bar{k}_n}{n}-\frac{i}{n}\right)}\right)\right]\\&+a\left(1-\frac{\bar{k}_n}{n^2}+\mathcal{O}\left(\frac{1}{n^2}\right)\right)\left(e^{p\left(\frac{\bar{k}_n}{n}-\frac{i}{n}\right)}+\mathcal{O}\left(\frac{1}{n}\right)\right)+\mathcal{O}\left(\frac{1}{n}\right)\\&=\frac{\bar{k}_n}{n}+\left(\frac{1}{p}+b\right)\left(1-e^{p\left(\frac{\bar{k}_n}{n}-\frac{i}{n}\right)}\right)+ae^{p\left(\frac{\bar{k}_n}{n}-\frac{i}{n}\right)}+\mathcal{O}\left(\frac{1}{n}\right).
            \end{align*}
		\end{enumerate}
			Having exhausted all cases, by the law of total expectation we can compute, by adopting empty sum convention,
			\begin{align*}
				\E V_{\pi_T}&=\frac{1}{n+1}\sum_{i=1}^{k_n-1}\E_i V_{\pi_T}+\frac{1}{n+1}\sum_{i=k_n}^{\bar{k}_n-1}\E_i V_{\pi_T}+\frac{1}{n+1}\sum_{i=\bar{k}_n}^{j_n-1}\E_i V_{\pi_T}+\frac{1}{n+1}\sum_{i=j_n}^{n+1}\E_i V_{\pi_T}\\&\defeq S_{1,k_n-1}+S_{k_n,\bar{k}_n-1}+S_{\bar{k}_n,j_n-1}+S_{j_n,n+1}.
			\end{align*}
   By the previous estimates, we obtain that, as $n\longrightarrow\infty$,
   \begin{enumerate}[i), noitemsep]
		\item $S_{1,k_n-1}=\mu_n^2+\mu_n\left(\frac{1}{p}+b\right)\left(1-e^{p\left(\mu_n-1\right)}\right)+\mathcal{O}\left(\frac{1}{n}\right)$.
        \item $S_{k_n,\bar{k}_n-1}=\frac{\nu_n^2}{2}-\frac{\mu_n^2}{2}+\left(\frac{1}{p}+b\right)(\nu_n-\mu_n)-\frac{e^{-p}}{p}\left(\frac{1}{p}+b\right)(e^{p\nu_n}-e^{p\mu_n})+\mathcal{O}\left(\frac{1}{n}\right)$.
        \item $S_{\bar{k}_n,j_n-1}=-\nu_n^2+\lambda_n\nu_n+(\lambda_n-\nu_n)\left(\frac{1}{p}+b\right)\left(1-e^{p\left(\nu_n-1\right)}\right)+ \mathcal{O}\left(\frac{1}{n}\right)$.
		\item $S_{j_n,n+1}=-\lambda_n\nu_n+\nu_n+\left(\frac{1}{p}+b\right)(1-\lambda_n)-\frac{1}{p}\left(\frac{1}{p}+b-a\right)(e^{-p(\lambda_n-\nu_n)}-e^{-p(1-\nu_n)})+\mathcal{O}\left(\frac{1}{n}\right)$.
        \end{enumerate}
        Putting these asymptotic estimates together yields the claim, upon a few rearrangements and cancellations.
    \end{proof}
	Finally, we prove our main result: the state-of-the-art $\CR$-hardness of RO and the separation of RO from OS.
	\begin{proof}[Proof of \Cref{mainth}]
		By \Cref{optimal,accept} (a, b), letting \[\lambda_*\defeq 1+\frac{1}{p}\log\frac{1+(b-a)p}{1+bp},\quad \mu_*\defeq 1+\frac{1}{p}\log\frac{1}{1+bp},\] it follows that $\E V_{\pi_T}= q_{a,b,p}(\lambda_*,\mu_*,\nu_n)+\smallO(1)$. For simplicity we will omit the reference to the parameters $a$, $b$, $p$, $\lambda_*$ and $\mu_*$ in the notation of the exponential quadratic. The dependence on $n$ is implicit in the notation: recall that in $\E V_{\pi_T}$, $T\in C^{n+1}$. Since as $n\longrightarrow\infty$, $\mu_n\longrightarrow\mu_*$, $\mu_n\leq\nu_n\leq\lambda_n$ and $\lambda_n\longrightarrow\lambda_*$, $\E V_{\pi_T}\leq \max_{\nu\in[\mu_*,\lambda_*]}q(\nu)+\smallO(1)$. By \Cref{max}, as $n\longrightarrow\infty$
		\[\frac{\E V_{\pi_T}}{\E\max_{i\in[n+1]}V_i}\leq \frac{\max_{\nu\in[\mu_*,\lambda_*]}q(\nu)}{1+b(1-e^{-p})+ae^{-p}}+\smallO(1).\]
		Denote as $m(a,b,p)\defeq\max_{\nu\in[\mu_*,\lambda_*]}q(\nu)$.
		Then 	
		\begin{equation}\label{limsup}
			\limsup_{n\longrightarrow\infty}\frac{\E V_{\pi_T}}{\max_{i\in[n+1]}V_i}\leq \frac{m(a,b,p)}{1+b(1-e^{-p})+ae^{-p}}\defeq M(a,b,p).
		\end{equation}
  		We start by computing $m(a,b,p)$. Then the maximisation is performed on $\mu_*\leq \nu\leq \lambda_*$. Note that
		\[q'(\nu)=1-\nu+\left[\left(\frac{1}{p}+b\right)+(1+bp)(\nu-\lambda_*)-a\right]e^{p(\nu-1)}-\left(\frac{1}{p}+b-a\right)e^{p(\nu-\lambda_*)},\]
		and since 
		$\left(p^{-1}+b-a\right)e^{p(\nu-\lambda_*)}=\left(p^{-1}+b\right)e^{p(\nu-1)}$, it follows that 
		\begin{align*}
			q'(\nu)&=1-\nu+[(1+bp)(\nu-\lambda_*)-a] e^{p(\nu-1)},\\
			q''(\nu)&=-1+\left\lbrace 1+p[b-a+(1+bp)(\nu-\lambda_*)]\right\rbrace e^{p(\nu-1)},\\
			q'''(\nu)&=p\{2+p[2b-a+(1+bp)(\nu-\lambda_*)]\} e^{p(\nu-1)}. 
		\end{align*}
		Note that $q'''(\nu)>0$, because $0\geq\nu-\lambda_*\geq\mu_*-\lambda_*=-p^{-1}\log[1+p(b-a)]$ for all $\mu_*\leq\nu\leq\lambda_*$ implies that
		\[2+p[2b-a+(1+bp)(\nu-\lambda_*)]\geq2+p\left[2b-a-\frac{1+bp}{p}\log[1+p(b-a)\right]\geq2+pb[1-p(b-a)]>0,\]
		where we used $\log(1+x)\leq x$ in the second last inequality and Condition \ref{conditionIV} in the last inequality. It follows that $q'(\nu)$ is convex in the interval considered for maximisation. Since 
		\[q'(\mu_*)=\frac{1}{p}\log\left(\frac{1+bp}{1+(b-a)p}\right)-\frac{a}{1+bp},\quad q'(\lambda_*)=\frac{1}{p}\log\left(\frac{1+bp}{1+(b-a)p}\right)-a\frac{1+(b-a)p}{1+bp}\,,\]
		we can conclude that $q'(\mu_*)>q'(\lambda_*)$. Since by Condition \ref{conditionI} we have $q'(\lambda_*)\leq0$, by convexity we have only two possibilities (depending on $a$, $b$, $p$).
		\begin{itemize}
			\item $q'(\mu_*)\leq0$: then $q'(\nu)\leq0$ for all $\mu_*\leq\nu\leq\lambda_*$; in this case $q(\nu)$ is nonincreasing and therefore $q(\mu_*)=\max_{\nu\in[\mu_*,\lambda_*]}q(\nu)$, hence
			$ m(a,b,p)=q(\mu_*)$, which can be computed directly. Plugging the value of $m(a,b,p)$ into the definition of $M(a,b,p)$ in \Cref{limsup} yields
			$M(a,b,p)=\frac{q(\mu_*)}{1+b(1-e^{-p})+ae^{-p}}\,$;
			\item $q'(\mu_*)>0$: then $q'(\nu_*)=0$ for some $\mu_*<\nu_*<\lambda_*$ and $q(\nu)$ is not monotonic: it increases until it reaches its maximum at $\nu_*$ and then decreases. The decimal approximation of $\nu_*$ is computed numerically, since it is a nontrivial zero of the exponential linear polynomial $q'(\nu)$. This comes with no loss of rigour, since $q'(\nu)$ is smooth and has opposite signs at the ends of the interval $[\mu_*,\lambda_*]$, ensuring that we can exploit the \emph{bisection} algorithm to determine $\nu_*$, meaning that it can be found with arbitrary accuracy, provided enough computational power. Upon computing $q(\nu_*)=\max_{\nu\in[\mu_*,\lambda_*]}q(\nu)$ directly, we obtain $m(a,b,p)=q(\nu_*)$, plugging which into the definition of $M(a,b,p)$ in \Cref{limsup}, yields
			$M(a,b,p)=\frac{q(\nu_*)}{1+b(1-e^{-p})+ae^{-p}}\,$.
  		\begin{figure}
                \centering
			\pgfplotsset{scaled y ticks=false}
			\begin{tikzpicture}[scale =0.7]
				\begin{axis}[
					xmin=0, xmax=.5,
					ymin=-.05, ymax=.1,
					axis x line=middle,
					axis y line=middle,
					xlabel={$\nu$},
					ylabel={},
					xtick={.015, .211, .415, .5},
					xticklabels={$\mu_*$, $\nu_*$, $\lambda_*$, $0.5$},
					ytick={-.05,.1}, 
					yticklabels={$-0.05$, $0.1$}
					]
					\addplot[no marks,blue] expression[domain=0:.5,samples=100]{1-x+((1+1.24*.421)*(x-.415)-.789)*exp(.421*(x-1))} node[pos=.1,anchor=south west]{$q'(\nu)$}; 
				\end{axis}
			\end{tikzpicture}
                \caption{Root $\nu_*$ of $q'(\nu)$ in $[\mu_*,\lambda_*]$}
			\label{opt_sol}
		\end{figure}
            \end{itemize}

		From now on we fix $a=\aval$, $b=\bval$, $p=\pval$, thus specialising the class defined in \Cref{application} to Instance \ref{instance}, introduced in \Cref{hardinstance}. For these values, we compute directly that we are in the second nonmonotonic scenario. To implement the bisection algorithm, we run the Python method for scalar root finding from the scipy library, in the scipy.optimize package, bisect, with a tolerance of roughly $10^{-13}$, yielding the first twelve correct digits of $\nu_* = 0.211231196923\ndots$, from which we obtain $M(a,b,p)< \CR$ by direct computation. The competitive ratio of the optimal algorithm solving undisclosed RO has therefore been shown to be less than $\CR$, since for Instance \ref{instance} it attains a gambler-to-prophet ratio lower than this value, for all $n$ large enough. Thus no algorithm can achieve a better competitive ratio and RO is $\CR$-hard. Recall that since for Instance \ref{instance} undisclosed RO and disclosed RO are equivalent, the hardness of $\CR$ applies to both models. The separation from OS follows from the algorithm designed in \cite{PeTa22}, which ensures a competitive ratio for OS of $0.7251 > \CR$. This makes our proof independent from \cite{BubChip23}, as it does not require their refinement on the above algorithm, which could also be invoked here. 
    \end{proof}
    As a sanity check, we can assess through simulations the sharpness of the mathematical bounds, used to estimate the gambler-to-prophet ratio of the optimal algorithm $T$ on Instance \ref{instance}. The limiting bound $M(a,b,p)=\rat\ndots$ is essentially tight, since simulating the dynamic program for Instance \ref{instance} produces a gambler-to-prophet ratio of approximately $0.72354$ with $n=10^4$, $0.72349$ with $n=10^5$ and $n=10^6$. For the interested reader, the complete list of correct decimals of $M(a,b,p)$ is provided in \Cref{code}, where we also derive formally that the error committed on $\nu^*$ and $q(\nu^*)$. The reference to the shared code for the dynamic program used in the sanity check is also provided in \Cref{code}.

    \section{Conclusions}
    In this work we obtained: a new state-of-the-art hardness for RO of $\CR$ proved through asymptotic analytic techniques; a first non-simulation-assisted analysis of the online optimal algorithm separating RO from OS. The impact of the techniques used in the analysis is the following.
	\begin{itemize}[noitemsep, topsep=0pt, parsep=0pt, partopsep=0pt]
		\item They can be applied to a larger class of similar instances. For example, $n$ iid random variables with more than one nontrivial value could be used: simulations suggest that adding one more nontrivial value is likely to yield better hardness, with the analysis still relatively feasible. 
		\item They provide a rigorous foundation for the computational search of hard instances for RO, with possible tightness. Once our analytic strategy is executed, the problem is reduced to optimizing with respect to the parameters of the instance, which could be either approached mathematically or numerically. If approached numerically, as the conclusion of \Cref{mainth} shows, it can be simple enough to be carried out via theoretical bounds on the error of approximation. In particular, we only rely on a root finding routine, which is much simpler than simulating an upper bound for the expected return of RO within some set of parameters, which is the approach of \cite{BubChip23}. The latter has intrinsic limitations: one cannot simulate directly the expected return for RO, and potential tightness is therefore lost.
		\item They are likely to offer new insight into other open separation problems, such as the separation of undisclosed RO and disclosed RO.
	\end{itemize}
     \section*{Acknowledgements.} This research was partially supported by the EPSRC grant EP/W005573/1, the ERC CoG 863818 (ForM-SMArt) grant, the ANID Chile grant ACT210005 and the French Agence Nationale de la Recherche (ANR) under reference ANR-21-CE40-0020 (CONVERGENCE project). We would like to thank Jos\'{e} Correa and Bruno Ziliotto for their precious advice, and Mona Mohammadi and Roodabeh Safavi for early conversations.    
\newpage
\bibliographystyle{abbrv}
\bibliography{SeparatingRandom}
\clearpage
\appendix
\renewcommand{\appendixpagename}{Supplementary materials}
	\appendixpage

	\section{Supplements to Section \ref{asymp}}\label{suppasymp}
	In this section we prove the asymptotic results concerning the acceptance times of the optimal algorithm.
	\begin{proof}[Proof of \Cref{accept}]
	\begin{enumerate}[label={}, style=unboxed, leftmargin=0cm]
	\item~
	\begin{enumerate}[a), style=unboxed, leftmargin=0cm]	
	\item Recall that $j_n$ is the smallest $i$ such that $a\geq \phi_i$, as per \Cref{defj}. Since for all $n$ large enough, $j_n\in[n]$, we consider $i\in [n]$, and in order to derive the sharp asymptotic estimate of the claim, we start by calculating explicitly $\phi_i$ by deriving $\phi_n,\ldots,\phi_{i+1}$ via backward induction. 
 \setlist[description]{font=\normalfont\itshape\space}
 \begin{description}
 \item[Step 1.] Note that since $a<b$, for all $k\geq j_n$, $\phi_k<b$. In fact by \Cref{backwardphi}, $\phi_{k-1}=\E(V\vee\phi_k)\geq \phi_k$ for all $k\in[n]$ and $\phi_{j_n}<a<b$. By \Cref{backwardphi} for $k=n$, as $n\longrightarrow\infty$,
	\begin{align}\label{base_j}
	    \phi_n = \frac{1 + bp}{n}\,,
	\end{align}
	and for all $i+1\leq k<n+1$ we will iterate \Cref{base_j} through \Cref{backwardphi,backward}, that is through
	\begin{align}\label{indformula1}
		\phi_{k-1}&=\E(V\vee\phi_k)=\frac{1}{n}+\frac{p(b\vee\phi_k)}{n}+\left(1-          \frac{p}{n}-\frac{1}{n^2}\right)\phi_k=\frac{1+pb}{n}+\left(1-\frac{p}{n}-\frac{1}{n^2}\right)\phi_k,
	\end{align} 
	which follows by expanding the expectation of the maximum as in \Cref{backward} and using $0<\phi_k\leq n$ as observed in \Cref{trivial}, along with the aforementioned fact, that for all $i+1\leq k< n+1$, $b\vee\phi_k=b$. One more iteration will suffice to clarify what the induction hypothesis should be. Consider $k=n$ in \Cref{indformula1}, then we have that
	   \begin{align*}
		\phi_{n-1}&=\frac{1+pb}{n}+\left(1-\frac{p}{n}-\frac{1}{n^2}\right)\frac{1+pb}{n}=\frac{1+pb}{n}\left[1+\left(1-\frac{p}{n}-\frac{1}{n^2}\right)\right].
	\end{align*}
	The induction hypothesis is therefore that for $i+1\leq k<n+1$ and $n$ large enough
		\begin{equation}\label{induction_j} 
			\phi_{k}=\frac{1+bp}{n}\sum_{j=0}^{n-k}\left(1-\frac{p}{n}-\frac{1}{n}\right)^j.	
		\end{equation}
	By assuming \Cref{induction_j} and using \Cref{indformula1}, it follows that
    \begin{align*}
        \phi_{k-1}&=\frac{1+pb}{n}+\left(1-\frac{p}{n}-\frac{1}{n^2}\right)\left[\frac{1+bp}{n}\sum_{j=0}^{n-k}\left(1-\frac{p}{n}-\frac{1}{n}\right)^j\right]=\frac{1+bp}{n}\sum_{j=0}^{n-k+1}\left(1-\frac{p}{n}-\frac{1}{n}\right)^j
    \end{align*}
    and the induction step is complete. 
    \item[Step 2.] Having shown \Cref{induction_j} for all $i\leq k\leq n$, we take $k=i$, so as to obtain
				\begin{equation}\label{exactgamma_i}
					\phi_i=\frac{1+bp}{n}\sum_{j=0}^{n-i}\left(1-\frac{p}{n}-\frac{1}{n}\right)^j= \frac{1+bp}{n}\left[\frac{1-\left(1-\frac{p}{n}-\frac{1}{n^2}\right)^{n-i+1}}{\frac{p}{n}+\frac{1}{n^2}}\right].
				\end{equation} 
			Thus  \begin{equation}\label{approxgamma_i}
					\phi_i=\frac{1+bp}{p}\left[1-\left(1-\frac{p}{n}\right)^{n-i+1}\right]+\mathcal{O}\left(\frac{1}{n}\right),
				\end{equation}
				having used
				\begin{equation}\label{1+bp/p}
					\frac{\frac{1+bp}{n}}{\frac{p}{n}+\frac{1}{n^2}}=\frac{1+bp}{p+\frac{1}{n}}=\frac{1+bp}{p}\left(1+\mathcal{O}\left(\frac{1}{n}\right)\right)
				\end{equation}
				and, upon factorising 
				\begin{equation}\label{factorise}
					1-\frac{p}{n}-\frac{1}{n^2}=\left(1-\frac{p}{n}\right)\left(1-\frac{1+\smallO(1)}{n^2}\right),
				\end{equation} 
				having used
				\begin{equation}\label{1-1/n^2}
					\left(1-\frac{1+\smallO(1)}{n^2}\right)^{n-i+1}=e^{-(n-i+1)\frac{1+\smallO(1)}{n^2}+\mathcal{O}\left(\frac{1}{n^3}\right)}=1+\mathcal{O}\left(\frac{1}{n}\right).
				\end{equation}
				Plugging \Cref{approxgamma_i} into \Cref{defj}, we observe that since $j_n$ is the smallest $i$ such that $a\geq \phi_i$, one obtains equivalently that $j_n$ is the smallest $i\in [n]$ such that
				\begin{equation}\label{def_j}
					a\geq \frac{1+bp}{p}\left[1-\left(1-\frac{p}{n}\right)^{n-i+1}\right]+\mathcal{O}\left(\frac{1}{n}\right).
				\end{equation}
			 \item[Step 3.] We rearrange \Cref{def_j} into
				\begin{equation}\label{def_j_log}
					i\geq n+1-\frac{\log\left(1-\frac{pa}{1+bp}+\mathcal{O}\left(\frac{1}{n}\right)\right)}{\log\left(1-\frac{p}{n}\right)}
				\end{equation}
				and then use the Taylor expansion of the logarithms as $n\longrightarrow\infty$, yielding
				\begin{align*}
					\frac{\log\left(1-\frac{pa}{1+bp}+\mathcal{O}\left(\frac{1}{n}\right)\right)}{\log\left(1-\frac{p}{n}\right)}&=\frac{\log\left(1-\frac{pa}{1+bp}+\mathcal{O}\left(\frac{1}{n}\right)\right)}{-\frac{p}{n}+\mathcal{O}\left(\frac{1}{n^2}\right)}=-\frac{n}{p}\frac{\log\left(\frac{1+(b-a)p}{1+bp}\right)+\mathcal{O}\left(\frac{1}{n}\right)}{1+\mathcal{O}\left(\frac{1}{n}\right)}\\&=-\frac{n}{p}\log\left(\frac{1+(b-a)p}{1+bp}\right)+\mathcal{O}(1).
				\end{align*}
				Plugging the expansion into \Cref{def_j_log} yields
				\[i\geq n+1+\frac{n}{p}\log\left(\frac{1+(b-a)p}{1+bp}\right)+\mathcal{O}(1),\] from which it follows that $j_n$ is the smallest $i\in [n]$ such that, as $n\longrightarrow\infty$,
				\[i\geq n\left[1+\frac{1}{p}\log\left(\frac{1+(b-a)p}{1+bp}\right)\right]+\mathcal{O}\left(1\right).\] Thus by the standard asymptotics of the ceiling function, as $n\longrightarrow\infty$, we have that
				\[j_n=\bigg\lceil n\left[1+\frac{1}{p}\log\left(\frac{1+(b-a)p}{1+bp}\right)\right]+\mathcal{O}\left(1\right)\bigg\rceil\sim n\left[1+\frac{1}{p}\log\left(\frac{1+(b-a)p}{1+bp}\right)\right].\]
			Note that since $0<a<1<b$, the coefficient of $n$ is positive by the inequality \[\log (1+x)\geq \frac{x}{1+x}\] for all $x>-1$, which implies, by taking \[x=-\frac{ap}{1+bp},\] that
   \[\log\left(\frac{1+(b-a)p}{1+bp}\right)=\log\left(1-\frac{ap}{1+bp}\right)>\frac{-\frac{ap}{1+bp}}{1-\frac{ap}{1+bp}}=-\frac{ap}{1+(b-a)p}>-p,\] yielding the claim.
    \end{description}				
        \item Replacing $a$ with $b$ in \Cref{def_j} shows that, by \Cref{defk}, $k_n$ is the smallest $2\leq i\leq n$ such that, as $n\longrightarrow\infty$,
	\[i\geq n\left[1+\frac{1}{p}\log\left(\frac{1}{1+bp}\right)\right]+\mathcal{O}\left(1\right),\] which yields \Cref{accept} (b) through a similar concluding argument as that of \Cref{accept} (a). Note that the coefficient of $n$ is positive by the condition $\log(1+pb)<p$ on the parameters of Instance \ref{instance}.
                    
	\item Noting that for Instance \ref{instance} we have $b>a>0$ and $p>0$, it follows that $1<1+(b-a)p$, and therefore \Cref{accept} (a, b) immediately imply that eventually $k_n\leq j_n$, since  \[1+\frac{1}{p}\log\frac{1}{1+bp}<1+\frac{1}{p}\log\frac{1+(b-a)p}{1+bp}.\]
				
	\item To show that $\bar{k}_n\geq k_n$ as $n\longrightarrow\infty$, assume by contradiction that there exists a subsequence $\{n_l\}$  such that $\bar{k}_{n_l}< k_{n_l}$. For simplicity, we relabel the indices with $n$. Thus our hypothesis is that for infinitely many $n$, $\bar{k}_{n}< k_{n}$. By \Cref{defbark}, $\bar{k}_{n}$ is the smallest $k$ such that $\bar{\phi}_k\leq b$, so it follows that for all $n$ considered, 
    \begin{equation}\label{contradiction}\bar{\phi}_{k_n-1}\leq b.\end{equation} 
    Our strategy will be to derive a contradiction with \Cref{contradiction} thanks to an iterative lower bound on $\bar{\phi}_{k_n-1}$. 
    \setlist[description]{font=\normalfont\itshape\space}
    \begin{description}
    \item[Step1.] First of all note that by \Cref{mono}, also for all $k\geq k_n$, $\bar{\phi}_{k}\leq b$. In this first step we exploit this fact to derive a lower bound on $\bar{\phi}_k$ for all $k\geq k_n-1$. By \Cref{backwardphi,backward} and the facts aforementioned, we have that
\begin{align}\label{backwardlower}
\bar{\phi}_k&\geq\frac{a}{n-k+1}+\left(1-\frac{1}{n-k+1}\right)\left(\frac{1}{n}+p\frac{b\vee\bar{\phi}_{k+1}}{n}+\bar{\phi}_{k+1}\left(1-\frac{p}{n}-\frac{1}{n^2}\right)\right)\notag\\&=\frac{a}{n-k+1}+\left(1-\frac{1}{n-k+1}\right)\left(\frac{1+bp}{n}+\bar{\phi}_{k+1}\left(1-\frac{p}{n}-\frac{1}{n^2}\right)\right)
\end{align}
Then by \Cref{backwardlower} applied with $k=n-1$, using $\bar{\phi}_{n}=a$, we derive
\begin{equation}\label{base_k}
\bar{\phi}_{n-1}\geq a+\frac{1}{2}\left[\frac{1+(b-a)p}{n}-\frac{a}{n^2}\right].
\end{equation}
Iterating from \Cref{base_k} via \Cref{backwardlower} we obtain, by induction, that for all $k\geq k_n-1$,
\begin{equation}\label{indk}
\bar{\phi}_k\ge a+\left(\frac{1+(b-a)p}{n}-\frac{a}{n^2}\right)\sum_{j=0}^{n-k-1}\frac{n-k-j}{n+1-k}\left(1-\frac{p}{n}-\frac{1}{n^2}\right)^j.
\end{equation}
In fact if \Cref{indk} is true for any $k_n<k<n+1$, by \Cref{backwardlower} applied to $k-1$ we obtain
\begin{align*}
\bar{\phi}_{k-1}&\ge \frac{a}{n-k+2}+\left(1-\frac{1}{n-k+2}\right)\bigg[\frac{1+pb}{n}+a\left(1-\frac{p}{n}-\frac{1}{n^2}\right)\\&+\left(\frac{1+(b-a)p}{n}-\frac{a}{n^2}\right)\sum_{j=0}^{n-k-1}\frac{n-k-j}{n+1-k}\left(1-\frac{p}{n}-\frac{1}{n^2}\right)^{j+1}\bigg]\\& =a+\left(\frac{1+(b-a)p}{n}-\frac{a}{n^2}\right)\frac{n-k+1}{n-k+2}\left[1+\sum_{j=1}^{n-k}\frac{n-k-(j-1)}{n-(k-1)}\left(1-\frac{p}{n}-\frac{1}{n^2}\right)^j\right]\\&=a+\left(\frac{1+(b-a)p}{n}-\frac{a}{n^2}\right)\sum_{j=0}^{n-k}\frac{n-(k-1)-j}{n+1-(k-1)}\left(1-\frac{p}{n}-\frac{1}{n^2}\right)^j.
\end{align*}
\item[Step 2.] In this step we derive a sharp asymptotic estimate for \Cref{indk}. Let $q= q(n,p)\defeq 1-\sfrac{p}{n}-\sfrac{1}{n^2}$. The summation term can be rewritten as
\begin{align*}
&S_{k,n}\defeq\sum_{j=0}^{n-k-1}\frac{n+1-k-(j+1)}{n+1-k}q^j=\sum_{j=0}^{n-k-1}q^j-\frac{1}{n+1-k}\sum_{j=0}^{n-k-1}(j+1)q^j=\\&\frac{1-q^{n-k}}{1-q}-\frac{1}{n+1-k}\frac{d}{dq}\left(\sum_{j=0}^{n-k}q^j\right)=\frac{1-q^{n-k}}{1-q}-\frac{1}{n+1-k}\frac{d}{dq}\left(\frac{1-q^{n-k+1}}{1-q}\right)=\\&\frac{1-q^{n-k}}{1-q}-\frac{1}{n+1-k}\frac{1-q^{n-k}[(n-k)(1-q)+1]}{(1-q)^2},
\end{align*}
so by factorising $q$ as in \Cref{factorise}, and exploiting
\begin{equation}\label{e^-p}
\left(1-\frac{p}{n}\right)^n=e^{-p}+\mathcal{O}\left(\frac{1}{n}\right),
\end{equation}
we have that for all $k\geq k_n-1$,
\begin{align*}
S_{k,n}&=\frac{1-\left(1-\frac{p}{n}-\frac{1}{n^2}\right)^{n-k}}{\frac{p}{n}+\frac{1}{n^2}}-\frac{1}{n+1-k}\frac{1-\left(1-\frac{p}{n}-\frac{1}{n^2}\right)^{n-k}\left[(n-k)\left(\frac{p}{n}+\frac{1}{n^2}\right)+1\right]}{\left(\frac{p}{n}+\frac{1}{n^2}\right)^2}\\&=\frac{n}{p}\left[1-e^{-p\left(1-\frac{k}{n}\right)}+\mathcal{O}\left(\frac{1}{n}\right)\right]-\frac{n}{p^2\left(1-\frac{k}{n}\right)}\left\lbrace1-e^{-p\left(1-\frac{k}{n}\right)}\left[1+p\left(1-\frac{k}{n}\right)\right]+\mathcal{O}\left(\frac{1}{n}\right)\right\rbrace\\&=\frac{n}{p}\left[1-\frac{1-e^{-p\left(1-\frac{k}{n}\right)}}{p\left(1-\frac{k}{n}\right)}+\mathcal{O}\left(\frac{1}{n\left(1-\frac{k}{n}\right)}\right)\right].
\end{align*}
Recall that by \Cref{accept} (b), \[\frac{k_n}{n}\longrightarrow 1-\frac{\log(1+bp)}{p}.\] For $k = k_n-1$, the error term
\[\mathcal{O}\left(\frac{1}{n\left(1-\frac{k}{n}\right)}\right)=\mathcal{O}\left(\frac{1}{n}\right).\]
\item[Step 3.] We plug the asymptotic estimate of $S_{k_n-1,n}$ into \Cref{indk}, for $k=k_n-1$, and we obtain
\begin{align*}
\bar{\phi}_{k_n-1}&\geq a+\left(\frac{1+(b-a)p}{p}-\frac{a}{pn}\right)\left[1-\frac{1-e^{-p\left(1-\frac{k_n-1}{n}\right)}}{p\left(1-\frac{k_n-1}{n}\right)}+\mathcal{O}\left(\frac{1}{n}\right)\right]\\&=a+\frac{1+(b-a)p}{p}-\frac{1+(b-a)p}{p}\frac{1-e^{-p\left(1-\frac{k_n}{n}\right)}}{p\left(1-\frac{k_n}{n}\right)}+\mathcal{O}\left(\frac{1}{n}\right)\\&=b+\frac{1}{p}-\frac{1+(b-a)p}{p}\frac{1-e^{-p\left(1-\frac{k_n}{n}\right)}}{p\left(1-\frac{k_n}{n}\right)}+\mathcal{O}\left(\frac{1}{n}\right).
\end{align*}
Applying \Cref{accept} (b) yields
\begin{align*}
\bar{\phi}_{k_n-1}&\geq b+\frac{1}{p}-\frac{1+(b-a)p}{p}\frac{1-e^{-p\left(1-\frac{k_n}{n}\right)}}{p\left(1-\frac{k_n}{n}\right)}+\mathcal{O}\left(\frac{1}{n}\right)\\&=b+\frac{1}{p}-\frac{1+(b-a)p}{p}\frac{1-\frac{1}{1+bp}}{\log(1+pb)}+\smallO(1)=b+\frac{1}{p}-\frac{b[1+(b-a)p]}{(1+pb)\log(1+pb)}+\smallO(1).
\end{align*}
By Condition \ref{conditionV}, which ensures that \[\frac{1}{p}-\frac{b[1+(b-a)p]}{(1+pb)\log(1+pb)}>0,\] and $n$ being arbitrarily large, we have that $\bar{\phi}_{k_n-1}>b$, which contradicts \Cref{contradiction}. The assumption that there are infinitely many $n$ such that $\bar{k}_n<k_n$ is therefore false, meaning that for all $n$ large enough, $k_n\leq\bar{k}_n$.
\end{description}
\item To show that $\bar{k}_n\leq j_n$ as $n\longrightarrow\infty$, assume by contradiction that there exists a subsequence $\{n_l\}$ such that $\bar{k}_{n_l}> j_{n_l}$ as $l\longrightarrow\infty$. For simplicity, relabel $n_l$ as $n$, thus starting the argument, without loss of generality, with the assumption by contradiction, that $\bar{k}_n> j_n$ for infinitely many $n$. The overall strategy will be the following: under the assumption by contradiction we derive an upper bound on $\bar{k}_n$, which we recall to be the smallest integer $k$ such that $b\geq\bar{\phi}_{k}$ as per \Cref{defbark}. Thanks to this upper bound we will show that $\sfrac{\bar{k}_n}{j_n}<1$, contradicting the assumption that $\sfrac{\bar{k}_n}{j_n}>1$. To obtain such an upper bound on $\bar{k}_n$ we will find first a suitable upper bound on $\bar{\phi}_{\bar{k}_n}$.
        \begin{description}[style=unboxed]
        \item[Step 1.] Since for all $\bar{k}_n\leq k\leq n$, we have that $k\geq j_n$, not only we know that $\bar{\phi}_k\leq b$, but we also know that $\phi_k\leq a$, and therefore by \Cref{backwardphi,backward} we obtain that
        \begin{equation}\label{backwardupper}
            \bar{\phi}_k\leq\frac{a}{n-k+1}+\left(1-\frac{1}{n-k+1}\right)\left[\frac{1+bp}{n}+\bar{\phi}_{k+1}\left(1-\frac{p}{n}\right)\right].
        \end{equation}
        Recall that $\bar{\phi}_{n}=a$. Then by \Cref{backwardupper} it follows that
        \begin{equation}\label{baseupper}
            \bar{\phi}_{n-1}\leq\frac{a}{2}+\frac{1}{2}\left[\frac{1+bp}{n}+a\left(1-\frac{p}{n}\right)\right]=a+\frac{1+(b-a)p}{2n}.
        \end{equation}
        The induction hypothesis will be that for any $\bar{k}_n\leq k\leq n-1$,
        \begin{equation}\label{inductupper}
            \bar{\phi}_{k+1}\leq a+\frac{n-k-1}{2}\frac{1+(b-a)p}{n}.
        \end{equation}
        Then by \Cref{inductupper,backwardupper} it follows that 
        \begin{align*}
            \bar{\phi}_k&\leq\frac{a}{n-k+1}+\left(1-\frac{1}{n-k+1}\right)\left[\frac{1+bp}{n}+\left(a+\frac{n-k-1}{2}\frac{1+(b-a)p}{n}\right)\left(1-\frac{p}{n}\right)\right]\\&=a+\left(1-\frac{1}{n-k+1}\right)\left[\frac{n-k+1}{2}\frac{1+(b-a)p}{n}-\frac{p}{n}\frac{n-k-1}{2}\frac{1+(b-a)p}{n}\right]\\&\leq a+\left(1-\frac{1}{n-k+1}\right)\frac{n-k+1}{2}\frac{1+(b-a)p}{n}=a+\frac{n-k}{2}\frac{1+(b-a)p}{n}.
        \end{align*}
        By induction on $k$ we obtain that 
        \begin{equation}\label{middlestage}
            \bar{\phi}_{\bar{k}_n}\leq a+\frac{n-\bar{k}_n}{2}\frac{1+(b-a)p}{n}.
        \end{equation}
        By a trivial induction argument one can iterate this bound until time $j_n$; showing the first step will suffice. Since for all $j_n\leq k<\bar{k}_n$, we have that $b\vee\bar{\phi}_k=\bar{\phi_k}$ and $a\vee\phi_k=a$, by \Cref{backward,backwardphi} it follows that
        \begin{align}\label{backwardupper_}
            \bar{\phi}_k&\leq\frac{a}{n-k+1}+\left(1-\frac{1}{n-k+1}\right)\left[\frac{1}{n}+\frac{p}{n}\bar{\phi}_{k+1}+\bar{\phi}_{k+1}\left(1-\frac{p}{n}\right)\right]\notag\\&=\frac{a}{n-k+1}+\left(1-\frac{1}{n-k+1}\right)\left[\frac{1}{n}+\bar{\phi}_{k+1}\right].
        \end{align}
        In the induction steps past time $\bar{k}_n$, \Cref{backwardupper_} will take the place of \Cref{backwardupper}. We show the first step. By \Cref{backwardupper_,middlestage} and the fact that \[\frac{1}{n}<\frac{1+(b-a)p}{n},\] we have that
        \begin{align*}
            \bar{\phi}_{\bar{k}_n-1}&\leq \frac{a}{n-\bar{k}_n+2}+\left(1-\frac{1}{n-\bar{k}_n+2}\right)\left[\frac{1}{n}+a+\frac{n-\bar{k}_n}{2}\frac{1+(b-a)p}{n}\right]\\&\leq a+\left(1-\frac{1}{n-\bar{k}_n+2}\right)\frac{n-\bar{k}_n+2}{2}\frac{1+(b-a)p}{n}= a+\frac{n-\bar{k}_n+1}{2}\frac{1+(b-a)p}{n},
        \end{align*}
       The mechanism of this iteration is trivial, due to the cancellation of the fractions carrying over for all $k$, and therefore what we obtained for the previous steps can be iterated by induction for all successive steps. Thus we have shown that for all $j_n\leq k\leq n-1$,
        \begin{equation}\label{claimk}
            \bar{\phi}_{k}\leq a+\frac{n-k}{2}\frac{1+(b-a)p}{n}.
        \end{equation}
        \item[Step 2.] Consider 
        \[
            k_n^*\defeq\inf\left\lbrace k\geq j_n:\:b\geq a+\frac{n-k}{2}\frac{1+(b-a)p}{n}\right\rbrace \,.
        \] 
        Equivalently, $k_n^*$ is the smallest $k\geq j_n$ such that 
        \[
            k \geq n\frac{1-(2-p)(b-a)}{1+p(b-a)} \,.
        \] 
        Note that $k_n^* \le n$ due to $b > a$, and $k_n^*  > 0$ due to Condition \ref{conditionII}. 
        Then, by a similar reasoning  as in the conclusion of \Cref{accept} (a), 
        \[
            k_n^*=\left\lceil n\frac{1-(2-p)(b-a)}{1+p(b-a)} \right\rceil\sim n\frac{1-(2-p)(b-a)}{1+p(b-a)} \,.
        \]
        Note that 
        \begin{equation}
            \label{barstar}
            \bar{k}_n\leq k_n^* \,,
        \end{equation} 
        since, by \Cref{claimk}, the earliest $k\geq j_n$ such that $b\geq\bar{\phi}_k$ is smaller than the earliest $k\geq j_n$ such that 
        \[
            b\geq a+\frac{n-k}{2}\frac{1+(b-a)p}{n} \,.
        \]
        
        \item[Step 3.] Recall that we are assuming that $j_n <  \bar{k}_n$. Equivalently, we have that 
        \[
            1+\frac{1}{j_n}\leq\frac{\bar{k}_n}{j_n} \,.
        \] 
        By \Cref{barstar}, Condition \ref{conditionIII}, and \Cref{accept} (a), we reach the following contradiction: as $n\longrightarrow\infty$,
        \[ \frac{\bar{k}_n}{j_n}\geq 1+\frac{1}{j_n}\longrightarrow 1+\frac{1}{1+\frac{1}{p}\log\frac{1+p(b-a)}{1+pb}}>1\]
               and 
                \[\frac{\bar{k}_n}{j_n}\leq\frac{k_n^*}{j_n}
                \longrightarrow\frac{\frac{1-(2-p)(b-a)}{1+p(b-a)}}{1+\frac{1}{p}\log\frac{1+p(b-a)}{1+pb}}
                < 1 \,.
        \] 
        Thus it must hold that for all $n$ large enough, $\bar{k}_n\leq j_n$.
        \end{description}
	\end{enumerate}
        \end{enumerate}
	\end{proof}
	\section{Supplements to Section \ref{hardness}}\label{suppmax}
	In this section we start by computing the exact asymptotics for the expectation of the maximum for Instance \ref{instance}.
	\begin{proof}[Proof of \Cref{max}]
		Since by assumption $b>a$, for all $n$ large enough
		\begin{equation*}
		      \max_{i\in[n+1]}V_i\sim
			\begin{cases}
				n,&\text{w.p. }\:1-\left(1-\frac{1}{n^2}\right)^n,\\
				b,&\text{w.p. }\:\left(1-\frac{1}{n^2}\right)^n-\left(1-\frac{p}{n}-\frac{1}{n^2}\right)^n,\\
				a,&\text{w.p. }\:\left(1-\frac{p}{n}-\frac{1}{n^2}\right)^n.
			\end{cases}		
		\end{equation*}
		Thus \[\E\max_{i\in[n+1]}V_i=n\left[1-\left(1-\frac{1}{n^2}\right)^n\right]+b\left[\left(1-\frac{1}{n^2}\right)^n-\left(1-\frac{p}{n}-\frac{1}{n^2}\right)^n\right]+a\left(1-\frac{p}{n}-\frac{1}{n^2}\right)^n.\]
		By a similar estimate as in \Cref{1-1/n^2}, factorising as in \Cref{factorise}, we have that
		\[\left(1-\frac{1}{n^2}\right)^n=1-\frac{1}{n}+\mathcal{O}\left(\frac{1}{n^3}\right),\] so that by exploiting \Cref{e^-p} it follows that
		\[\E\max_{i\in[n+1]}V_i=n\left[\frac{1}{n}+\mathcal{O}\left(\frac{1}{n^3}\right)\right]+b\left[1-\frac{1}{n}+\mathcal{O}\left(\frac{1}{n^3}\right)-e^{-p}+\mathcal{O}\left(\frac{1}{n}\right)\right]+a\left[e^{-p}+\mathcal{O}\left(\frac{1}{n}\right)\right]\]
		and the claim follows.
	\end{proof}
	Next, we compute a sharp asymptotic estimate for the expectation of the optimal algorithm $\E V_{\pi_T}$. We will do so through the law of total expectation with respect to the partitioning $\{\Omega_1,\ldots,\Omega_{n+1}\}$ of the sample space $\Omega$. The role of \Cref{accept} in this computation can be appreciated from the following. Consider $i<j_n$ for instance: conditionally on $\Omega_i$, the optimal stopping rule does not stop when $a$ or $0$ are probed (except for $0$, in the last step), but it stops the first time $n$ is probed, or, at certain times, when $b$ is probed. Thus to determine when $b$ is accepted or not, knowledge of the relative position of $\bar{k}_n$ with respect to $k_n$ and $j_n$ is needed. Thanks to \Cref{accept} we avoid a lengthy case analysis, since only one ordering is possible: $k_n\leq \bar{k}_n\leq j_n$. 
	\begin{proof}[Proof of \Cref{optimal}]
		We start by finding the distribution of $(V_{\pi_T}|\Omega_i)$ for all $i\in[n+1]$ (the abuse of notation denotes the conditional distribution of $V_{\pi_T}$ given $\Omega_i$), so as to determine, for all $n$ large enough, $\E_iV_{\pi_T}$ for all $i\in[n+1]$. Recall that, conditionally on $\Omega_i$, at every step of the process other than the $i$th, independently, the algorithm could probe $n$, $b$ and $0$, with probabilities $\sfrac{1}{n^2}$, $\sfrac{p}{n}$ and $1-\sfrac{p}{n}-\sfrac{1}{n^2}$ respectively. While $n$ is always accepted, $0$ is never accepted, until the end, by the optimal algorithm.
	\begin{enumerate}[i)]
				\item If $i< k_n$, conditionally on $\Omega_i$, $a$ is probed before time $k_n\leq j_n$ by \Cref{accept} (c), so $a$ will be rejected. There are in total $n$ steps, in which $n$ could be probed and accepted. Before time $k_n$, the optimal algorithm stops with reward $n$ if and only if any of the first $k_n-2$ values probed is $n$, because if $b$ were probed before time $i$, the algorithm would not stop by definition of $\bar{k}_n$, since $i<k_n\leq \bar{k}_n$ by \Cref{accept} (d); if $b$ were probed between time $i$ and $k_n$, the algorithm would not stop by definition of $k_n$. Thus the probability of stopping with reward $n$ before time $k_n$ is \[1-\left(1-\frac{1}{n^2}\right)^{k_n-2}.\] From time $k_n$ onward, at each step, the optimal algorithm stops with reward $n$ if and only if all previous values probed, starting from time $k_n$, are $0$. In fact if after or at time $k_n$, $b$ were to be probed before $n$, the algorithm would stop with reward $b$ by definition of $k_n>i$. Thus the probability of stopping with reward $n$ at each step from time $k_n$ onward (there are $n+2-k_n$ such steps), is the probability of not probing $n$ in any of the first $k_n-2$ steps (this guarantees that the algorithm reaches the $k_n$th step), multiplied by the probability of always probing $0$ from time $k_n$ up to the first time $n$ is probed: 
                \[
                    \frac{1}{n^2}\left(1-\frac{1}{n^2}\right)^{k_n-2}\:\sum_{j=0}^{n+1-k_n}\left(1-\frac{p}{n}-\frac{1}{n^2}\right)^j \,.
                \]	
                There are in total $n$ steps, in which $b$ could be probed, but only in $n+2-k_n$ of these, from time $k_n$ onward, $b$ would be accepted. From the previous comments on the acceptance of $b$, at each of these steps, the probability that the optimal algorithm stops with reward $b$ is the probability of not probing $n$ in any of the first $k_n-2$ steps (this guarantees that the algorithm reaches the $k_n$th step), multiplied by the probability that all remaining probed values are $0$, from time $k_n$ up to the first time $b$ is probed. Thus the probability of stopping with reward $b$ is \[\frac{p}{n}\left(1-\frac{1}{n^2}\right)^{k_n-2}\:\sum_{j=0}^{n+1-k_n}\left(1-\frac{p}{n}-\frac{1}{n^2}\right)^j.\] In conclusion, for every $i<k_n$,
				\begin{equation}\label{ii:i<ka}	(V_{\pi_T}|\Omega_i)\sim\begin{cases}n,&\text{w.p. }1-\left(1-\frac{1}{n^2}\right)^{k_n-2}+\frac{1}{n^2}\left(1-\frac{1}{n^2}\right)^{k_n-2}\sum_{j=0}^{n+1-k_n}\left(1-\frac{p}{n}-\frac{1}{n^2}\right)^j\\b,&\text{w.p. }\frac{p}{n}\left(1-\frac{1}{n^2}\right)^{k_n-2}\sum_{j=0}^{n+1-k_n}\left(1-\frac{p}{n}-\frac{1}{n^2}\right)^j \\0,&\text{otherwise.}\end{cases}
				\end{equation}
				Since
				\[\sum_{j=0}^{n+1-k_n}\left(1-\frac{p}{n}-\frac{1}{n^2}\right)^j =\frac{1-\left(1-\frac{p}{n}-\frac{1}{n^2}\right)^{n+2-k_n} }{\frac{p}{n}+\frac{1}{n^2}}\]
				and from \Cref{1-1/n^2,e^-p} it follows that
				\[\left(1-\frac{p}{n}-\frac{1}{n^2}\right)^{k_n-2}=\left(1-\frac{p}{n}\right)^{k_n}+\mathcal{O}\left(\frac{1}{n}\right)=\left[\left(1-\frac{p}{n}\right)^n\right]^{\frac{k_n}{n}}+\mathcal{O}\left(\frac{1}{n}\right)=e^{-p\frac{k_n}{n}}+\mathcal{O}\left(\frac{1}{n}\right)\]
				and
				\[\left(1-\frac{p}{n}-\frac{1}{n^2}\right)^{n+2-k_n}=\frac{e^{p\frac{k_n}{n}}}{e^p}+\mathcal{O}\left(\frac{1}{n}\right)=e^{p\left(\frac{k_n}{n}-1\right)}+\mathcal{O}\left(\frac{1}{n}\right),\]
				we conclude that 
				\begin{equation}\label{sum}
					\sum_{j=0}^{n+1-k_n}\left(1-\frac{p}{n}-\frac{1}{n^2}\right)^j =\frac{1-e^{p\left(\frac{k_n}{n}-1\right)}+\mathcal{O}\left(\frac{1}{n}\right)}{\frac{p}{n}+\frac{1}{n^2}}=\frac{n}{p}\left[1-e^{p\left(\frac{k_n}{n}-1\right)}+\mathcal{O}\left(\frac{1}{n}\right)\right].
				\end{equation} 
				Thus
				\begin{equation}\label{sumprod1}
					\frac{p}{n}\sum_{j=0}^{n+1-k_n}\left(1-\frac{p}{n}-\frac{1}{n^2}\right)^j =1-e^{p\left(\frac{k_n}{n}-1\right)}+\mathcal{O}\left(\frac{1}{n}\right)
				\end{equation}
				and
				\begin{equation}\label{sumprod2}
					\frac{1}{n^2}\sum_{j=0}^{n+1-k_n}\left(1-\frac{p}{n}-\frac{1}{n^2}\right)^j =\frac{1}{pn}\left(1-e^{p\left(\frac{k_n}{n}-1\right)}\right)+\mathcal{O}\left(\frac{1}{n^2}\right).
				\end{equation}
				Considering that $\sfrac{k_n}{n}$ is subunitary and bounded away from $0$ and $1$ by \Cref{accept} (b), it also follows that 
				\begin{align}\label{1-1/n^2kn}
					\left(1-\frac{1}{n^2}\right)^{k_n-2}&=\left(1-\frac{1}{n^2}\right)^{k_n}+\mathcal{O}\left(\frac{1}{n^2}\right)=e^{-\frac{k_n}{n^2}+\mathcal{O}\left(\frac{k_n}{n^4}\right)}+\mathcal{O}\left(\frac{1}{n^2}\right)\notag\\&=1-\frac{k_n}{n^2}+\mathcal{O}\left(\frac{k_n}{n^4}\right)+\mathcal{O}\left[\left(\frac{k_n}{n^2}+\mathcal{O}\left(\frac{k_n}{n^4}\right)\right)^2\right]+\mathcal{O}\left(\frac{1}{n^2}\right)\notag\\&=1-\frac{k_n}{n^2}+\mathcal{O}\left(\frac{1}{n^2}\right).
				\end{align} 
				In conclusion plugging \Cref{sumprod1,sumprod2,1-1/n^2kn} into \Cref{ii:i<ka} yields
				\begin{align}\label{ii:Omega_i<ka}
					\E_iV_{\pi_T}&=\notag\\&n\left[\frac{k_n}{n^2}+\mathcal{O}\left(\frac{1}{n^2}\right)\right]+\left[1-\frac{k_n}{n^2}+\mathcal{O}\left(\frac{1}{n^2}\right)\right]\left[\left(\frac{1}{p}+b\right)\left(1-e^{p\left(\frac{k_n}{n}-1\right)}\right)+\mathcal{O}\left(\frac{1}{n}\right)\right]\notag\\&=\frac{k_n}{n}+\left(\frac{1}{p}+b\right)\left(1-e^{p\left(\frac{k_n}{n}-1\right)}\right)+\mathcal{O}\left(\frac{1}{n}\right).
				\end{align}
    
				\item If $k_n \leq i < \bar{k}_n$ (assuming that there is any such $i$; if not, the empty sum convention used in \Cref{ii:lawtot_ka<kc<j} will take care of the case $k_n=\bar{k}_n$) conditionally on $\Omega_i$, $a$ is probed after or at time $k_n$, strictly before time $\bar{k}_n\leq j_n$ by \Cref{accept} (d, e), so $a$ will be rejected. There are in total $n$ steps, in which $n$ could be probed and accepted. Before time $i$, the optimal algorithm stops with reward $n$ if and only if any of the first $i-1$ values probed is $n$, because if $b$ were probed before time $i$, the optimal algorithm would not stop by definition of $\bar{k}_n>i$. Thus the probability of stopping with reward $n$ before time $i$ is \[1-\left(1-\frac{1}{n^2}\right)^{i-1}.\] 
                From time $i$ onward, at each step, the optimal algorithm stops with reward $n$ if and only if all the previous values probed, starting from time $i+1$, are $0$. In fact if after time $i\geq k_n$, $b$ were to be probed before $n$, the algorithm would stop with reward $b$ by definition of $k_n$. Thus the probability of stopping with reward $n$ at each step from time $i+1$ onward (there are $n+1-i$ such steps), is the probability of not probing $n$ in any of the first $i-1$ steps (which ensures that the algorithm reaches the $i$th step), multiplied by the probability of always probing $0$, from time $i+1$ up to the first time $n$ is probed: 
                \[
                    \frac{1}{n^2}\left(1-\frac{1}{n^2}\right)^{i-1}\:\sum_{j=0}^{n-i}\left(1-\frac{p}{n}-\frac{1}{n^2}\right)^j \,.
                \]
                There are in total $n$ steps, in which $b$ could be probed, but only in $n+1-i$ of these steps, from time $i+1$ onward, $b$ would be accepted, since $k_n\leq i<\bar{k}_n$. From the previous comments on the acceptance of $b$, at each of these steps, the probability that the optimal algorithm stops with reward $b$ is the probability of not probing $n$ in any of the first $i-1$ steps (which ensures that the algorithm reaches the $i$th step), multiplied by the probability that all remaining probed values are $0$, from time $i+1$ up to the first time $b$ is probed. Thus the probability of stopping with reward $b$ is 
                \[
                    \frac{p}{n}\left(1-\frac{1}{n^2}\right)^{i-1}\:\sum_{j=0}^{n-i}\left(1-\frac{p}{n}-\frac{1}{n^2}\right)^j \,.
                \] 
                In conclusion for every $k_n\leq i<\bar{k}_n$,
				\begin{equation}\label{ii:ka<i<kc}
					(V_{\pi_T}|\Omega_i)\sim\begin{cases}n,&\text{w.p. }1-\left(1-\frac{1}{n^2}\right)^{i-1}+\frac{1}{n^2}\left(1-\frac{1}{n^2}\right)^{i-1}\sum_{j=0}^{n-i}\left(1-\frac{p}{n}-\frac{1}{n^2}\right)^j\\b,&\text{w.p. }\frac{p}{n}\left(1-\frac{1}{n^2}\right)^{i-1}\sum_{j=0}^{n-i}\left(1-\frac{p}{n}-\frac{1}{n^2}\right)^j \\0,&\text{otherwise.}\end{cases}
				\end{equation}
				which is an expression similar to that of \Cref{ii:i<ka}, yielding, through analogous methods (which apply since $k_n\leq i<j_n$ by \Cref{accept} (a, b)),
				\begin{align}\label{ii:Omega_ka<i<kc}
					\E_iV_{\pi_T}=\frac{i}{n}+\left(\frac{1}{p}+b\right)\left(1-e^{p\left(\frac{i}{n}-1\right)}\right)+\mathcal{O}\left(\frac{1}{n}\right).
				\end{align}
				
				\item If $\bar{k}_n\leq i<j_n$ (assuming that there is any such $i$; if not, the empty sum convention used in \Cref{ii:lawtot_ka<kc<j} will take care of the case $\bar{k}_n=j_n$) conditionally on $\Omega_i$, $a$ is probed after or at time $\bar{k}_n$, before time $j_n$, so $a$ will be rejected. There are in total $n$ steps, in which $n$ could be probed and accepted. Before time $\bar{k}_n$, the optimal algorithm stops with reward $n$ if and only if any of the first $\bar{k}_n-1$ values probed is $n$, because if $b$ were probed before step $\bar{k}_n\leq i$, the algorithm would not stop by definition of $\bar{k}_n$. Thus the probability of stopping with reward $n$ before time $\bar{k}_n$ is \[1-\left(1-\frac{1}{n^2}\right)^{\bar{k}_n-1}.\] From time $\bar{k}_n$ onward, the optimal algorithm stops with reward $n$ if and only if all previous values probed (except for the $i$th), starting from time $k_n$, are $0$. In fact as of time $\bar{k}_n$, if $b$ were to be probed before $n$, the optimal algorithm would always stop with reward $b$: if probed between time $\bar{k}_n$ (included) and time $i$, by definition of $\bar{k}_n$; from time $i+1>\bar{k}_n\geq k_n$, by definition of $k_n$ and \Cref{accept} (d). Thus the probability of stopping with reward $n$ at each step from time $\bar{k}_n$ (there are $n+1-\bar{k}_n$ such steps, due to the exception of the $i$th step), is the probability of not probing $n$ in any of the first $\bar{k}_n-1$ steps (which ensures that the algorithm reaches the $\bar{k}_n$th step) multiplied by the probability of always probing $0$, from time $\bar{k}_n$ up to the first time $n$ is probed (skipping time $i$ when appropriate: the actual value of $i$ does not affect the final expression): 
				\[\frac{1}{n^2}\left(1-\frac{1}{n^2}\right)^{\bar{k}_n-1}\:\sum_{j=0}^{n-\bar{k}_n}\left(1-\frac{p}{n}-\frac{1}{n^2}\right)^j.\]
				There are in total $n$ steps, in which $b$ could be probed, but only in $n+1-\bar{k}_n$ of these, from time $\bar{k}_n$ onward, $b$ would be accepted. From the previous comments on the acceptance of $b$, at each of these steps, either between time $\bar{k}_n$ and $i$ or from time $i+1>\bar{k}_n\geq k_n$, the probability of the optimal algorithm stopping with reward $b$ is the probability of not probing $n$ in any of the first $\bar{k}_n-1$ steps (which ensures that the algorithm reaches the $\bar{k}_n$th step) multiplied by the probability that all remaining probed values are $0$, from time $\bar{k}_n$ up to the first time $b$ is probed (skipping time $i$ when appropriate: the actual value of $i$ does not affect the final expression): 
				\[\frac{p}{n}\left(1-\frac{1}{n^2}\right)^{\bar{k}_n-1}\:\sum_{j=0}^{n-\bar{k}_n}\left(1-\frac{p}{n}-\frac{1}{n^2}\right)^j.\]
				In conclusion for every $\bar{k}_n\leq i<j_n$,
				\begin{equation}\label{ii:kc<i<j}
                (V_{\pi_T}|\Omega_i)\sim\begin{cases}n,&\text{w.p. } 1-\left(1-\frac{1}{n^2}\right)^{\bar{k}_n-1}+\frac{1}{n^2}\left(1-\frac{1}{n^2}\right)^{\bar{k}_n-1}\sum_{j=0}^{n-\bar{k}_n}\left(1-\frac{p}{n}-\frac{1}{n^2}\right)^j\\b,&\text{w.p. }\frac{p}{n}\left(1-\frac{1}{n^2}\right)^{\bar{k}_n-1}\sum_{j=0}^{n-\bar{k}_n}\left(1-\frac{p}{n}-\frac{1}{n^2}\right)^j\\0,&\text{otherwise.}\end{cases}
				\end{equation}
				This is the same as \Cref{ii:i<ka}, except for formally having $\bar{k}_n+1$ instead of $k_n$ in the expression. This difference only contributes with $\mathcal{O}(\sfrac{1}{n})$-terms, so we can conclude, by similar methods, that
				\begin{equation}\label{ii:Omega_kc<i<j}
					\E_iV_{\pi_T}=\frac{\bar{k}_n}{n}+\left(\frac{1}{p}+b\right)\left(1-e^{p\left(\frac{\bar{k}_n}{n}-1\right)}\right)+\mathcal{O}\left(\frac{1}{n}\right).
				\end{equation}
    
				\item If $i\geq j_n$, conditionally on $\Omega_i$, $a$ is probed after or at time $j_n$, so $a$ will be accepted and the algorithm will not get past step $i$. There are in total $i-1$ steps, in which $n$ could be probed and accepted. Before time $\bar{k}_n$, the optimal algorithm stops with reward $n$ if and only if any of the first $\bar{k}_n-1$ values probed is $n$, because if $b$ were probed before time $\bar{k}_n<i$, the algorithm would not stop by definition of $\bar{k}_n$. Thus the probability of stopping with reward $n$ before time $\bar{k}_n$ is  \[1-\left(1-\frac{1}{n^2}\right)^{\bar{k}_n-1}.\] 
                From time $\bar{k}_n$ onward, at each step, the optimal algorithm stops with reward $n$ if and only if all previous values probed, starting from time $\bar{k}_n$, are $0$. In fact as of time $\bar{k}_n$, if $b$ were to be probed before $n$, the algorithm would stop with reward $b$ by definition of $\bar{k}_n$. Thus the probability of stopping with reward $n$ at each step from time $\bar{k}_n$ onward (there are $i-\bar{k}_n$ such steps), is the probability of not probing $n$ in any of the first $\bar{k}_n-1$ steps (which ensures that the algorithm reaches the $\bar{k}_n$th step), multiplied by the probability of always probing $0$, from time $\bar{k}_n$ up to the first time $n$ is probed:
				\[\frac{1}{n^2}\left(1-\frac{1}{n^2}\right)^{\bar{k}_n-1}\:\sum_{j=0}^{i-\bar{k}_n-1}\left(1-\frac{p}{n}-\frac{1}{n^2}\right)^j.\]
                There are in total $i-1$ steps, in which $b$ could be probed, but only in $i-\bar{k}_n$ of these, from time $\bar{k}_n$ to time $i-1$, $b$ would be accepted. From the previous comments on the acceptance of $b$, at each of these steps, the probability that the optimal algorithm stops with reward $b$ is the probability of not probing $n$ in any of the first $\bar{k}_n-1$ steps (which ensures that the algorithm reaches the $\bar{k}_n$th step), multiplied by the probability that all remaining probed values are $0$, from time $\bar{k}_n$ up to the first time $b$ is probed. Thus the probability of stopping with $b$ is 
                \[
                    \frac{p}{n}\left(1-\frac{1}{n^2}\right)^{\bar{k}_n-1}\:\sum_{j=0}^{i-\bar{k}_n-1}\left(1-\frac{p}{n}-\frac{1}{n^2}\right)^j \,.
                \]
                From the above it follows also that the probability of stopping with reward $a$ is the probability of not probing $n$ in any of the first $\bar{k}_n-1$ steps, multiplied by the probability of always probing $0$ in all the remaining $i-\bar{k}_n$ steps, that is 
                \[
                    \left(1-\frac{1}{n^2}\right)^{\bar{k}_n-1}\left(1-\frac{p}{n}-\frac{1}{n^2}\right)^{i-\bar{k}_n} \,.
                \] 
                In conclusion for every $i> j_n$,
				\begin{equation}\label{ii:i>j}
					(V_{\pi_T}|\Omega_i)\sim\begin{cases}n,&\text{w.p. }1-\left(1-\frac{1}{n^2}\right)^{\bar{k}_n-1}+\frac{1}{n^2}\left(1-\frac{1}{n^2}\right)^{\bar{k}_n-1}\sum_{j=0}^{i-\bar{k}_n-1}\left(1-\frac{p}{n}-\frac{1}{n^2}\right)^j\\b,&\text{w.p. }\frac{p}{n}\left(1-\frac{1}{n^2}\right)^{\bar{k}_n-1}\sum_{j=0}^{i-\bar{k}_n-1}\left(1-\frac{p}{n}-\frac{1}{n^2}\right)^j \\a,&\text{w.p. }\left(1-\frac{1}{n^2}\right)^{\bar{k}_n-1}\left(1-\frac{p}{n}-\frac{1}{n^2}\right)^{i-\bar{k}_n},\end{cases}
				\end{equation}
				and therefore, through the usual estimation methods, by \Cref{accept}(c)
				\begin{align}\label{ii:Omega_i>j}
					\E_iV_{\pi_T}&=n\left[\frac{\bar{k}_n}{n^2}+\mathcal{O}\left(\frac{1}{n^2}\right)\right]+\left[1-\frac{\bar{k}_n}{n^2}+\mathcal{O}\left(\frac{1}{n^2}\right)\right]\left[\left(\frac{1}{p}+b\right)\left(1-e^{p\left(\frac{\bar{k}_n}{n}-\frac{i}{n}\right)}\right)\right]\notag\\&+a\left(1-\frac{\bar{k}_n}{n^2}+\mathcal{O}\left(\frac{1}{n^2}\right)\right)\left(e^{p\left(\frac{\bar{k}_n}{n}-\frac{i}{n}\right)}+\mathcal{O}\left(\frac{1}{n}\right)\right)+\mathcal{O}\left(\frac{1}{n}\right)\notag\\&=\frac{\bar{k}_n}{n}+\left(\frac{1}{p}+b\right)\left(1-e^{p\left(\frac{\bar{k}_n}{n}-\frac{i}{n}\right)}\right)+ae^{p\left(\frac{\bar{k}_n}{n}-\frac{i}{n}\right)}+\mathcal{O}\left(\frac{1}{n}\right).
				\end{align} 
			\end{enumerate}
		
			Having exhausted all cases, by the law of total expectation we can compute, by adopting empty sum convention,
			\begin{align}\label{ii:lawtot_ka<kc<j}
				\E V_{\pi_T}&=\frac{1}{n+1}\sum_{i=1}^{k_n-1}\E_i V_{\pi_T}+\frac{1}{n+1}\sum_{i=k_n}^{\bar{k}_n-1}\E_i V_{\pi_T}+\frac{1}{n+1}\sum_{i=\bar{k}_n}^{j_n-1}\E_i V_{\pi_T}+\frac{1}{n+1}\sum_{i=j_n}^{n+1}\E_i V_{\pi_T}\notag\\&\defeq S_{1,k_n-1}+S_{k_n,\bar{k}_n-1}+S_{\bar{k}_n,j_n-1}+S_{j_n,n+1}.
			\end{align}
            \begin{enumerate}[i)]
		\item We start by calculating $S_{1,k_n-1}$. By \Cref{ii:Omega_i<ka}, denoting as $\mu_n \defeq \sfrac{k_n}{n}$ (note that by \Cref{accept} (b) this quantity is subunitary and bounded away from zero and one as $n\longrightarrow\infty$), we have that
			\begin{align*}
				&\frac{1}{n+1}\sum_{i=1}^{k_n-1}\left[\frac{k_n}{n}+\left(\frac{1}{p}+b\right)\left(1-e^{p\left(\frac{k_n}{n}-1\right)}\right)+\mathcal{O}\left(\frac{1}{n}\right)\right]\\&=\frac{k_n-1}{n+1}\frac{k_n}{n}+\frac{k_n-1}{n+1}\left(\frac{1}{p}+b\right)\left(1-e^{p\left(\frac{k_n}{n}-1\right)}\right)+\mathcal{O}\left(\frac{1}{n}\right),
			\end{align*}
			so it follows that
			\begin{equation}\label{ii:S1ka}
				S_{1,k_n-1}=\mu_n^2+\mu_n\left(\frac{1}{p}+b\right)\left(1-e^{p\left(\mu_n-1\right)}\right)+\mathcal{O}\left(\frac{1}{n}\right).
			\end{equation}
            \item Next we calculate $S_{k_n,\bar{k}_n-1}$. By \Cref{ii:Omega_ka<i<kc}, denoting $\nu_n \defeq \sfrac{\bar{k}_n}{n}$ (note that by \Cref{accept} (a, b, d, e) this quantity is subunitary and bounded away from zero and one as $n\longrightarrow\infty$), we have that
			\begin{align*}
				&\frac{1}{n+1}\sum_{i=k_n}^{\bar{k}_n-1}\left[\frac{i}{n}+\left(\frac{1}{p}+b\right)\left(1-e^{p\left(\frac{i}{n}-1\right)}\right)+\mathcal{O}\left(\frac{1}{n}\right)\right]=\frac{\bar{k}_n(\bar{k}_n-1)-k_n(k_n-1)}{2n(n+1)}\\&+\left(\frac{1}{p}+b\right)\frac{\bar{k}_n-k_n}{n+1}-\frac{1}{n+1}\left(\frac{1}{p}+b\right)\sum_{i=k_n}^{\bar{k}_n-1}e^{p\left(\frac{i}{n}-1\right)}+\mathcal{O}\left(\frac{1}{n}\right)\\&=\frac{\nu_n^2}{2}-\frac{\mu_n^2}{2}+\left(\frac{1}{p}+b\right)(\nu_n-\mu_n)-\frac{e^{-p}}{n+1}\left(\frac{1}{p}+b\right)\frac{e^{p\mu_n}-e^{p\nu_n}}{1-e^\frac{p}{n}}+\mathcal{O}\left(\frac{1}{n}\right).
			\end{align*}
			Therefore, since 
			\[\frac{1}{n+1}\frac{e^{p\mu_n}-e^{p\nu_n}}{1-e^\frac{p}{n}}=\frac{1}{n+1}\frac{e^{p\mu_n}-e^{p\nu_n}}{-\frac{p}{n}+\mathcal{O}\left(\frac{1}{n^2}\right)}=\frac{e^{p\mu_n}-e^{p\nu_n}}{-p+\mathcal{O}\left(\frac{1}{n}\right)}=\frac{e^{p\nu_n}-e^{p\mu_n}}{p}+\mathcal{O}\left(\frac{1}{n}\right),\]
			it follows that
			\begin{equation}\label{ii:Skakc}
				S_{k_n,\bar{k}_n-1}=\frac{\nu_n^2}{2}-\frac{\mu_n^2}{2}+\left(\frac{1}{p}+b\right)(\nu_n-\mu_n)-\frac{e^{-p}}{p}\left(\frac{1}{p}+b\right)(e^{p\nu_n}-e^{p\mu_n})+\mathcal{O}\left(\frac{1}{n}\right).
			\end{equation}
            \item Next we calculate $S_{\bar{k}_n,j_n-1}$. Denoting as $\lambda_n \defeq \sfrac{j_n}{n}$ (note that by \Cref{accept} (a) this quantity is subunitary and bounded away from zero and one as $n\longrightarrow\infty$), by \Cref{ii:Omega_kc<i<j} we have that
			\begin{align*}
				&\frac{1}{n+1}\sum_{i=\bar{k}_n}^{j_n-1}\left[\frac{\bar{k}_n}{n}+\left(\frac{1}{p}+b\right)\left(1-e^{p\left(\frac{\bar{k}_n}{n}-1\right)}\right)+\mathcal{O}\left(\frac{1}{n}\right)\right]\\&=\frac{(j_n-\bar{k}_n)\bar{k}_n}{(n+1)n}+\frac{j_n-\bar{k}_n}{n+1}\left(\frac{1}{p}+b\right)\left(1-e^{p\left(\frac{\bar{k}_n}{n}-1\right)}\right)+\mathcal{O}\left(\frac{1}{n}\right)\\&=(\lambda_n-\nu_n)\nu_n+(\lambda_n-\nu_n)\left(\frac{1}{p}+b\right)\left(1-e^{p\left(\nu_n-1\right)}\right)+\mathcal{O}\left(\frac{1}{n}\right),
			\end{align*}
			so it follows that
			\begin{equation}\label{ii:Skcj}
				S_{\bar{k}_n,j_n-1}=-\nu_n^2+\lambda_n\nu_n+(\lambda_n-\nu_n)\left(\frac{1}{p}+b\right)\left(1-e^{p\left(\nu_n-1\right)}\right)+ \mathcal{O}\left(\frac{1}{n}\right).
			\end{equation}
		\item Finally we compute $S_{j_n,n+1}$. By \Cref{ii:Omega_i>j} we have that
			\[\frac{1}{n+1}\sum_{i=j_n}^{n+1}\left[\frac{\bar{k}_n}{n}+\left(\frac{1}{p}+b\right)\left(1-e^{p\left(\frac{\bar{k}_n}{n}-\frac{i}{n}\right)}\right)+ae^{p\left(\frac{\bar{k}_n}{n}-\frac{i}{n}\right)}+\mathcal{O}\left(\frac{1}{n}\right)\right]\] can be expanded as \begin{align*}&\frac{\bar{k}_n(n+2-j_n)}{n(n+1)}+\frac{1}{n+1}\left(\frac{1}{p}+b\right)\sum_{i=j_n}^{n+1}\left(1-e^{p\left(\frac{\bar{k}_n}{n}-\frac{i}{n}\right)}\right)+\\&\frac{a}{n+1}\sum_{i=j_n}^{n+1}e^{p\left(\frac{\bar{k}_n}{n}-\frac{i}{n}\right)}+\mathcal{O}\left(\frac{1}{n}\right)=\nu_n-\lambda_n\nu_n+\left(\frac{1}{p}+b\right)\frac{n+2-j_n}{n+1}-\\&\frac{1}{n+1}\left(\frac{1}{p}+b-a\right)\sum_{j=j_n-\bar{k}_n}^{n+1-\bar{k}_n}e^{-\frac{pj}{n}} +\mathcal{O}\left(\frac{1}{n}\right)=\nu_n-\lambda_n\nu_n+\left(\frac{1}{p}+b\right)(1-\lambda_n)-\\&\frac{1}{n+1}\left(\frac{1}{p}+b-a\right)\left(\frac{e^{-p\frac{j_n-\bar{k}_n}{n}}-e^{-p\frac{n+2-\bar{k}_n}{n}}}{1-e^{-\frac{p}{n}}}\right)+\mathcal{O}\left(\frac{1}{n}\right)=\nu_n-\lambda_n\nu_n+\\&\left(\frac{1}{p}+b\right)(1-\lambda_n)-\frac{1}{n+1}\left(\frac{1}{p}+b-a\right)\frac{e^{-p(\lambda_n-\nu_n)}-e^{-p(1-\nu_n)}+\mathcal{O}\left(\frac{1}{n}\right)}{1-e^{-\frac{p}{n}}}+\mathcal{O}\left(\frac{1}{n}\right).
			\end{align*}
			Therefore, since 
			\begin{align*}
				&\frac{1}{n+1}\frac{e^{-p(\lambda_n-\nu_n)}-e^{-p(1-\nu_n)}+\mathcal{O}\left(\frac{1}{n}\right)}{1-e^{-\frac{p}{n}}}=\frac{1}{n+1}\frac{e^{-p(\lambda_n-\nu_n)}-e^{-p(1-\nu_n)}+\mathcal{O}\left(\frac{1}{n}\right)}{\frac{p}{n}+\mathcal{O}\left(\frac{1}{n^2}\right)}\\&=\frac{1}{p}\frac{e^{-p(\lambda_n-\nu_n)}-e^{-p(1-\nu_n)}}{1+\mathcal{O}\left(\frac{1}{n}\right)}+\mathcal{O}\left(\frac{1}{n}\right)=\frac{1}{p}(e^{-p(\lambda_n-\nu_n)}-e^{-p(1-\nu_n)})+\mathcal{O}\left(\frac{1}{n}\right),
			\end{align*}
			it follows that
			\begin{equation}\label{ii:Sjn}
				S_{j_n,n+1}=-\lambda_n\nu_n+\nu_n+\left(\frac{1}{p}+b\right)(1-\lambda_n)-\frac{1}{p}\left(\frac{1}{p}+b-a\right)(e^{-p(\lambda_n-\nu_n)}-e^{-p(1-\nu_n)})+\mathcal{O}\left(\frac{1}{n}\right).
			\end{equation}
        \end{enumerate}
			Plugging \Cref{ii:S1ka,ii:Skakc,ii:Skcj,ii:Sjn} into \Cref{ii:lawtot_ka<kc<j} yields, after a few cancellations and collecting of common factors, that
			\begin{align}\label{ii:expectation}
				\E V_{\pi_T}&=\frac{\mu_n^2}{2}-\frac{\nu_n^2}{2}+\nu_n+\frac{1}{p}+b+\left(\frac{1}{p}-\mu_n\right)\left(\frac{1}{p}+b\right)e^{p(\mu_n-1)}\notag\\&+\left[\left(\frac{1}{p}+b\right)(\nu_n-\lambda_n)-\frac{a}{p}\right]e^{p(\nu_n-1)}-\frac{1}{p}\left(\frac{1}{p}+b-a\right)e^{p(\nu_n-\lambda_n)}+\mathcal{O}\left(\frac{1}{n}\right)\notag\\&\defeq q_{a,b,p}(\lambda_n,\mu_n,\nu_n)+\mathcal{O}\left(\frac{1}{n}\right).
			\end{align}
	\end{proof}
	
	\section{Numerical approximations}\label{code}

    \paragraph{Root finding.}
	In this section the code for the computation of $M(a,b,p)\le 0.7235$ is shared. The  list of the first eleven correct decimals is $0.72348603329$, and the value $0.7235$ provided for the hardness is a rounded-up approximation, following from the parameters set in the bisection method. In the Python method used we set $\text{xtol}=10^{-13}$ and $\text{rtol}=10^{-14}$, and the bisection method is implemented such that when it stops, if $\hat{\nu}$ is the root found and $\nu_*$ is the true root of $\tilde{q}'(\nu)$, it is guaranteed that \[|\hat{\nu}-\nu_*|\leq\text{xtol}+|\hat{\nu}|\cdot\text{rtol}.\]
	Given that in our case we found $\hat{\nu}= 0.211231196923\ndots<0.3$, this ensures
	\[|\hat{\nu}-\nu_*|<10^{-13}+0.3\cdot10^{-14}<0.5\cdot10^{-13}.\]
	Moreover the function $q(\nu)$ has derivative close to zero at $\hat{\nu}$, since $\nu_*$ is a stationary point. In particular, for any $\zeta$ between $\hat{\nu}$ and $\nu_*$, $|q'(\zeta)|<1$ (this can be easily verified by direct computation), and therefore by the Lagrange remainder formula for Taylor approximation, we have \[|q(\hat{\nu})-\tilde{q}(\nu_*)|\leq|q'(\zeta)||\hat{\nu}-\nu_*|<|\hat{\nu}-\nu_*|<0.5\cdot10^{-13}.\]
	This largely ensures the correctness of the first eleven digits provided in the decimal approximation of $M(a,b,p)$, since the few other direct computations involved are performed at machine precision (which is slightly less than $2.22\cdot10^{-16}$ in Python).

    \paragraph{Numerical computations.}
    The Python codes used for numerical computations are available at \url{https://doi.org/10.5281/zenodo.10649575}.
    \begin{itemize} 
    \item The code bisection.py approximates the value of the upper bound $M(a,b,p)$ on the gambler-to-prophet ratio of the optimal algorithm for Instance \ref{instance}. 
    \item The code DP.py simulates the dynamic program used to produce \Cref{prophet} and an approximation of the gambler-to-prophet ratio of the optimal algorithm for Instance \ref{instance} with $n=10^6$, so as to estimate the sharpness of the value obtained for $M(a,b,p)$.
    \end{itemize}

	\section{Optional supplements to Section \ref{Preliminaries}}\label{suppreview}
	\subsection{Supplements to Section \ref{BIRO}}\label{suppbackwardcomp}
	In this section we provide the details of the rigorous measure-theoretic construction for (finite) random order processes, so as to extend \cite[Theorem~3.2]{ChoRobSieg71} to them. In order to do this formally, it is sufficient to define a filtration for the process. Let us start by showing how a randomly ordered list can be treated consistently as a process (note that the uniformity of the random order is not even needed, even though for simplicity of exposition we will assume it).
	
	\paragraph{Formal construction of the random order process.} Recall that $S_{n+1}$ denotes the set of permutations of $[n+1]$ and $\mathbb{R}_+^{n+1}\defeq\{x\in\mathbb{R}^{n+1}:\;x_i\geq 0, \; \forall\, 1\leq i\leq n+1\}$.
	We denote by $2^{S_{n+1}}$ the $\sigma$-algebra of subsets of $S_{n+1}$ and by $\mathcal{B}(\mathbb{R}_+^{n+1})$ the Borel $\sigma$-algebra on $\mathbb{R}_+^{n+1}$. Let $(\Omega,\mathcal{F}, \Prob )$ be the probability space supporting jointly $\pi$ and $X=(V_1,\ldots,V_{n+1})$, with $(\pi,\: X)$ valued in the product measure space \[(S_{n+1}\times\mathbb{R}_+^{n+1},\: 2^{S_{n+1}}\otimes \mathcal{B}(\mathbb{R}_+^{n+1}),\: \mu\otimes\nu),\] where: 
	\begin{itemize}[noitemsep]
		\item for every $\pi\in S_{n+1}$, $\mu(\pi) \equiv \frac{1}{(n+1)!}$ is the uniform probability law on the permutations; 
		\item $\nu$ is the law of $X$, that is the joint law of $V_1,\ldots,V_{n+1}$; 
		\item $\otimes$ denotes the standard product measure and $\sigma$-algebra.
	\end{itemize}
	Note that a random permutation can be seen as a stochastic process $\pi=(\pi_1,\ldots,\pi_n)$. This is a very natural thing to do. Think of the (uniform) law on $S_{n+1}$ as derived from the conditional laws 
	\begin{align*}
		\pi_1&\sim \Unif([n+1]).\\(\pi_2|\pi_1)&\sim \Unif([n+1]\setminus\{\pi_1\}),\\&\vdots\\(\pi_n|\pi_{n-1},\ldots,\pi_1)&\sim \Unif([n+1]\setminus\{\pi_1,\ldots,\pi_{n-1}\}),\\ (\pi_{n+1}|\pi_n,\ldots,\pi_1)&\sim\delta_{[n+1]\setminus\{\pi_1,\ldots,\pi_{n}\}},
	\end{align*} 
	where $\Unif(\cdot)$ denotes the discrete uniform distribution and $\delta_x$ is the Dirac distribution centred at $x$.\footnote{It is not necessary to have the uniform law in the general case treated here. We just adopt it for simplicity, since it will be the law used for RO. Any random ordering, depending on the model, can be similarly described and implemented within this general framework.} As a finite stochastic process, $\pi$ is the measurable map 
	\begin{align*}
		\pi:\:\Omega\times[n+1]&\longrightarrow [n+1]\\ (\omega,i)&\mapsto \pi_i(\omega)
	\end{align*} 
	and can be composed with the vector $X=(V_1,\ldots,V_{n+1})$, which can also be seen as a finite stochastic process with the joint laws previously given, that is a measurable map 
	\begin{align*}
		X:\:\Omega\times[n+1]&\longrightarrow\mathbb{R}_+\\(\omega,i)&\mapsto V_i(\omega).
	\end{align*} 
	The composition needs to be done by exploiting the extended \emph{graph map} associated with $\pi$, which is measurable and will be denoted as 
	\begin{align*}
		\tilde{\pi}:\:\Omega\times[n+1]&\longrightarrow \Omega\times[n+1]\\ (\omega,i)&\mapsto (\omega,\pi_i(\omega)).
	\end{align*}
	Due to the measurability of all maps involved (which clearly follows from the time index set being discrete), the result of composing $X$ and $\tilde{\pi}$ is the randomly permuted finite stochastic process $X^\pi\defeq X\circ\tilde{\pi}$ and corresponds, by going back to the usual random vector point of view, to the randomly permuted random vector of components $X_i^\pi=V_{\pi_i}$ for all $i\in [n+1]$.
	
	\paragraph{Formal construction of the random order filtration.} For every $B\in\mathcal{B}(\mathbb{R}_+)$, $V_{\pi_k}\in B$ if and only if for every $k\in [n+1]$, $\pi_k=i$ and $V_i\in B$. Therefore \[\{V_{\pi_k}\in B\}=\bigcap_{i=1}^{n+1}(\{\pi_k=i\}\cap\{V_i\in B\}).\] Equivalently \[V_{\pi_k}^{-1}(B)=\bigcap_{i=1}^{n+1}(\pi_k^{-1}(i)\cap V_i^{-1}(B)),\] and we can thus define the $\sigma$-algebra generated by $V_{\pi_k}$ as \[\sigma(V_{\pi_k})\defeq \sigma(\{F\in\mathcal{F}: F=V_{\pi_k}^{-1}(B),\;B\in\mathcal{B}(\mathbb{R}_+)\}).\] A filtration that accounts for the random order arrival of a general sequence of random variables $V_1\ldots,V_{n+1}$ with undisclosed RO can be constructed by considering $\mathcal{F}_k^{n+1}$ (which will be denoted simply as $\mathcal{F}_k$, by omitting the finite horizon) defined as follows: $\mathcal{F}_0\defeq\{\emptyset,\Omega\}$, and for all $k\in [n+1]$, $\mathcal{F}_k\defeq\sigma(V_{\pi_1},\ldots,V_{\pi_k})$. In disclosed RO one would have to interleave with $\pi$, that is $\mathcal{F}_k\defeq\sigma(V_{\pi_1},\pi_1,\ldots,V_{\pi_k},\pi_k)$. For simplicity we adopt the former filtration and work with the undisclosed model.
	
	\paragraph{Backward induction for random order processes.} Given a stopping time $\tau$, that is a random variable $\tau:\Omega\longrightarrow [n+1]$ such that $\{\tau=k\}\in\mathcal{F}_k$, we can also define the stopped version of the process $X^\pi$. Consider first the stopped process $\pi^\tau\defeq\{\pi_{i\wedge \tau},\:i\in[n+1]\}$ and then define $X^{\pi^\tau}\defeq X\circ\tilde{\pi}^\tau$, where $\tilde{\pi}^\tau:(\omega,i)\mapsto(\omega,\pi_{i\wedge \tau(\omega)}(\omega))$. Thus $X^{\pi^\tau}_i= V_{\pi_{i\wedge \tau}}$, so that the stopped values of $X^{\pi^\tau}$ is $V_{\pi_{\tau}}$. We will abuse the notation and simply refer to $X^\pi$ as $X$. Thus $X_i\defeq X_i^\pi=V_{\pi_i}$ for all $i\in [n+1]$ and the conditional expectation on the filtration $\E_{\mathcal{F}_k}(\cdot)$ can be equivalently denoted as $\E(\,\cdot\,|X_1,\ldots,X_k)$. To construct the optimal stopping rule we define $\gamma_{n+1} \defeq V_{\pi_{n+1}}$ and, for all $l \in [n]$,
	\[
	\gamma_l \defeq \max \{ V_{\pi_l}, \E_{\mathcal{F}_l}\gamma_{l+1}\} \,.
	\] 
	We also define, for each $k \in [n+1]$,
	\[
	s_k \defeq \inf \{ l\geq k : \; V_{\pi_l} = \gamma_l \}\,.
	\]
	Then we automatically have the following result, as an extension of \cite[Theorem~3.2]{ChoRobSieg71}. The proof remains the same, upon a formal change of their filtration with ours.
	\begin{theorem}\label{backwardcomp}
		For a given instance $V_1, V_2, \ldots, V_{n + 1}$ denote the random arrival order process $X\defeq(V_{\pi_1},\ldots,V_{\pi_{n+1}})$, and consider the backward induction values $ \gamma_{n+1}, \gamma_n, \ldots, \gamma_1$ and stopping times $s_{n+1}, s_n, \ldots, s_1$ as previously defined. Then for all $k\in[n+1]$, $s_k\in C_k$ and
		\[\E_{\mathcal{F}_k}V_{\pi_{s_k}}=\gamma_k\geq \E_{\mathcal{F}_k} V_{\pi_\tau}\] for all stopping rules $\tau\in C_k$. Taking expectations yields  \[\E V_{\pi_{s_k}}=\E\gamma_k\geq \E V_{\pi_\tau}\] for all stopping rules $\tau\in C_k$. Thus $\E\gamma_k=\sup_{\tau\in C_k}\E V_{\pi_\tau}$.
	\end{theorem}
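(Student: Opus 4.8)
The plan is to follow the classical finite-horizon optimal stopping argument of \cite[Theorem~3.2]{ChoRobSieg71} essentially verbatim, replacing their filtration by the random-order filtration $\{\mathcal{F}_k\}$ constructed above; the only model-specific work is to verify that this replacement is legitimate. First I would record the two structural facts that drive everything: each $\gamma_l$ is $\mathcal{F}_l$-measurable, and each $s_k$ is a stopping time belonging to $C_k$. Measurability of $\gamma_l$ follows by downward induction on $l$: $\gamma_{n+1}=V_{\pi_{n+1}}$ is $\mathcal{F}_{n+1}$-measurable, and if $\gamma_{l+1}$ is $\mathcal{F}_{l+1}$-measurable then $\E_{\mathcal{F}_l}\gamma_{l+1}$ is $\mathcal{F}_l$-measurable by the definition of conditional expectation, while $V_{\pi_l}$ is $\mathcal{F}_l$-measurable by the construction of the filtration (using that $V_{\pi_l}^{-1}(B)=\bigcup_{i\in[n+1]}(\pi_l^{-1}(i)\cap V_i^{-1}(B))$ via the measurable graph map $\tilde{\pi}$), so the maximum $\gamma_l$ is $\mathcal{F}_l$-measurable. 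Uniform boundedness of the $V_i$ guarantees $0\le\gamma_l\le M$ for all $l$, hence all conditional expectations are finite. For $s_k$: since $\gamma_{n+1}=V_{\pi_{n+1}}$ we have $k\le s_k\le n+1$, and $\{s_k=l\}=\bigcap_{j=k}^{l-1}\{V_{\pi_j}<\gamma_j\}\cap\{V_{\pi_l}=\gamma_l\}\in\mathcal{F}_l$, so $s_k\in C_k$.

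Next I would prove the value identity $\E_{\mathcal{F}_k}V_{\pi_{s_k}}=\gamma_k$ by downward induction on $k$. The base case $k=n+1$ is immediate since $s_{n+1}=n+1$. For the inductive step, observe that $s_k=k$ exactly on the $\mathcal{F}_k$-event $\{V_{\pi_k}=\gamma_k\}$, while $s_k=s_{k+1}$ on its complement $\{V_{\pi_k}<\gamma_k\}$, so $V_{\pi_{s_k}}=V_{\pi_k}\mathbf{1}_{\{V_{\pi_k}=\gamma_k\}}+V_{\pi_{s_{k+1}}}\mathbf{1}_{\{V_{\pi_k}<\gamma_k\}}$. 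Taking $\E_{\mathcal{F}_k}$, pulling out the $\mathcal{F}_k$-measurable indicators, and using the tower property with the inductive hypothesis $\E_{\mathcal{F}_{k+1}}V_{\pi_{s_{k+1}}}=\gamma_{k+1}$ gives $\E_{\mathcal{F}_k}V_{\pi_{s_k}}=V_{\pi_k}\mathbf{1}_{\{V_{\pi_k}=\gamma_k\}}+(\E_{\mathcal{F}_k}\gamma_{k+1})\mathbf{1}_{\{V_{\pi_k}<\gamma_k\}}$, which equals $\gamma_k$ on both events by the recursion $\gamma_k=\max\{V_{\pi_k},\E_{\mathcal{F}_k}\gamma_{k+1}\}$ (on the second event $\gamma_k\neq V_{\pi_k}$ forces $\gamma_k=\E_{\mathcal{F}_k}\gamma_{k+1}$).

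Then I would prove optimality, $\gamma_k\ge\E_{\mathcal{F}_k}V_{\pi_\tau}$ for every $\tau\in C_k$, again by downward induction on $k$, the base case $k=n+1$ being trivial since then $\tau\equiv n+1$. Given $\tau\in C_k$, set $\tau'\defeq\max\{\tau,k+1\}$; one checks $\tau'\in C_{k+1}$ because $\{\tau'=k+1\}=\{\tau\le k+1\}\in\mathcal{F}_{k+1}$ and $\{\tau'=l\}=\{\tau=l\}\in\mathcal{F}_l$ for $l>k+1$, and moreover $V_{\pi_\tau}\mathbf{1}_{\{\tau\ge k+1\}}=V_{\pi_{\tau'}}\mathbf{1}_{\{\tau\ge k+1\}}$ while $V_{\pi_\tau}\mathbf{1}_{\{\tau=k\}}=V_{\pi_k}\mathbf{1}_{\{\tau=k\}}$. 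Since $\{\tau=k\},\{\tau\ge k+1\}\in\mathcal{F}_k$, conditioning gives $\E_{\mathcal{F}_k}V_{\pi_\tau}=V_{\pi_k}\mathbf{1}_{\{\tau=k\}}+\mathbf{1}_{\{\tau\ge k+1\}}\E_{\mathcal{F}_k}V_{\pi_{\tau'}}$; bounding $V_{\pi_k}\le\gamma_k$, and using the tower property with the inductive hypothesis $\E_{\mathcal{F}_{k+1}}V_{\pi_{\tau'}}\le\gamma_{k+1}$ followed by the supermartingale inequality $\E_{\mathcal{F}_k}\gamma_{k+1}\le\gamma_k$ (immediate from the recursion), I conclude $\E_{\mathcal{F}_k}V_{\pi_\tau}\le\gamma_k\mathbf{1}_{\{\tau=k\}}+\gamma_k\mathbf{1}_{\{\tau\ge k+1\}}=\gamma_k$. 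Combining with the previous step, $\gamma_k=\E_{\mathcal{F}_k}V_{\pi_{s_k}}\ge\E_{\mathcal{F}_k}V_{\pi_\tau}$ for all $\tau\in C_k$; taking unconditional expectations yields $\E\gamma_k=\E V_{\pi_{s_k}}\ge\E V_{\pi_\tau}$, and since $s_k\in C_k$ the supremum $\sup_{\tau\in C_k}\E V_{\pi_\tau}$ is attained and equals $\E\gamma_k$.

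I expect no genuine obstacle in the probabilistic argument itself — it is the standard Snell-envelope backward-induction proof, and every inequality above uses only linearity, positivity, the tower property, and $\mathcal{F}_k$-measurability of the relevant indicators. The one point that actually requires care, and the reason the formal construction of the random-order probability space is needed, is confirming that $V_{\pi_l}$ is genuinely $\mathcal{F}_l$-measurable and that $\tau\mapsto V_{\pi_\tau}=\sum_{l}V_{\pi_l}\mathbf{1}_{\{\tau=l\}}$ is a well-defined, bounded, integrable random variable — i.e., that composing with the measurable graph map $\tilde{\pi}$ does not break adaptedness. Once that is in place, the proof of \cite[Theorem~3.2]{ChoRobSieg71} transfers word for word.
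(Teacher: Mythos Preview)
Your proposal is correct and is exactly the approach the paper takes: the paper does not write out a proof at all but simply states that ``the proof remains the same [as \cite[Theorem~3.2]{ChoRobSieg71}], upon a formal change of their filtration with ours,'' which is precisely what you do, spelling out the standard Snell-envelope backward-induction argument and verifying adaptedness with respect to the random-order filtration. In fact you supply more detail than the paper itself (and your $\bigcup$ in the description of $V_{\pi_l}^{-1}(B)$ is correct; the paper's displayed $\bigcap$ is a typo).
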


	\subsection{Supplements to Section \ref{application}}\label{suppbackward}
	In this section we describe a stopping rule for the discrete setting more explicitly, slightly adapting \cite[\S 7.1]{Ferg67} 
	to random order processes, in order to show how to compute the reward of the optimal stopping rule directly, and provide evidence for the claim that it is not a feasible option in our setting. We also comment on the fact that there is no loss of generality in working only with nonrandomised stopping rules, such as those yielded by backward induction.
	
	Recall from \Cref{suppbackwardcomp} that the distribution of the random variable $X_i$, denoting the $i$-th component of the random order process $X=(V_{\pi_1},\ldots,V_{\pi_{n+1}})$, is known jointly with the distribution of $\pi_i$, since by our model, regardless of whether $\pi_i$ is disclosed or not, \[(X_i|\pi_i=\sigma_i)\sim V_{\sigma_i},\] denoting the values taken by $X_i$ as $x_i$ and those taken by $\pi_i$ as $\sigma_i$. 
	
	A (possibly randomised) stopping rule for a sequence of deterministic reward functions $\{y_1(x_1),\ldots,y_{n+1}(x_1,\ldots,x_{n+1})\}$ can be described as a sequence $\psi=(\psi_1(x_1),\ldots\psi_{n+1}(x_1,\ldots,x_{n+1}))$ where $\psi_i=\psi_i(x_1,\ldots,x_i)$ is the probability of stopping at step $i$ given that $i$ observations have been taken, namely $X_1=x_1,\ldots,X_i=x_i$ (for non-randomised stopping rules $\psi_i\in\{0,1\}$ for all $i\in[n+1]$, which is the case of the optimal stopping rule constructed in \Cref{backwarddiscrete}). The stopping rule $\psi$ and the observations of $X$ up to time $i$ only, determine the stopping time $0\leq \tau\leq n+1$. More formally the conditional probability mass function of $\tau$ given $X=x$ can be denoted as $\psi=\{\psi_1,\ldots,\psi_{n+1}\}$ where 
	\[\psi_i(x_1,\ldots,x_i)=\Prob(\tau=i|X_1=x_1,\ldots, X_i=x_i)=\Prob(\tau=i|X=x).\]
	Note that the last equality encodes conditional independence of $\{\tau=i\}$, given the past up to the present, from future observations $X_{i+1},\ldots,X_{n+1}$, ensuring that it is a stopping time (the randomised case can be reduced to a stopping time by enlarging the probability space, as per the concluding remark of this section). By the assumptions of the model, we also have that
	\begin{align*}
		\psi_i(x_1,\ldots,x_i)&=\Prob(\tau=i|\pi_1=\sigma_1,X_1=x_1,\ldots, \pi_i=\sigma_i,X_i=x_i)\\&=\Prob(\tau=i|\pi=\sigma, X=x)=\Prob(\tau=i|X_1=x_1,\ldots,X_i=x_i),
	\end{align*} 
	since the decision to stop is reached while unaware of the arrival order of the distributions in the past, that is, more precisely, the stopping rule, conditionally on the observed values, is independent of the permutation (the last equality need not hold in general in disclosed RO). The conditional probability mass function $\psi$ determines the law of the stopping time $\tau$. Within this framework, optimal stopping is choosing a stopping rule $\psi$ determining a stopping time $T$ that yields maximal expectation of the stopped reward sequence $\E y_T\defeq \E V_{\pi_T}$. Since in our case $y_i(x_1,\ldots,x_i)\defeq x_i$, this expectation is calculated by exploiting $y_T=X_T=\sum_{i=1}^{n+1}X_i\mathbbm{1}_{\{T=i\}}$, yielding
	\begin{align*}
		\E X_T&=\E_{\pi,X}\E(X_T|\pi,X)=\E_{\pi,X}\sum_{i=1}^{n+1}X_i\Prob(T=i|\;\pi,X)=\E_{\pi,X}\sum_{i=1}^{n+1}X_i\psi_i(X_1,\ldots,X_i)\\&\defeq\E\sum_{i=1}^{n+1}V_{\pi_i}\psi_i(V_{\pi_1},\ldots,V_{\pi_i}),
	\end{align*}
	where the abuse of notation $\E_{\pi,X}(\cdot)$ does not stand for conditioning, but for the joint laws with respect to which integration is carried, upon change of variables. Such notation is no longer necessary in the last step, where all sources of randomness are explicitly stated in the integrands' notation. To clarify, all the measures being discrete, turns all integrations into summations:
	\begin{align*}	\E_{\pi,X}&\sum_{i=1}^{n+1}X_i\Prob(T=i|\;\pi,X)\\&=\sum_{\sigma\in S_{n+1}}\sum_{x\in \mathcal{S}_n}\sum_{i=1}^{n+1}x_i\Prob(T=i|\;\pi=\sigma,\;X=x)\Prob(\pi=\sigma,X=x)\\&=\sum_{\sigma\in S_{n+1}}\sum_{x\in \mathcal{S}_n}\sum_{i=1}^{n+1}x_i\Prob(T=i|X=x)\Prob(X=x|\pi=\sigma)\Prob(\pi=\sigma)\\&=\sum_{\sigma\in S_{n+1}}\frac{1}{(n+1)!}\sum_{x\in \mathcal{S}_n}\Prob(V_{\sigma_1}=x_1)\ldots\Prob(V_{\sigma_{n+1}}=x_{n+1})\sum_{i=1}^{n+1}x_i\psi_i(x_1,\ldots,x_i).
	\end{align*} 
	We will not use the abuse of notation aforementioned outside this section.
	
	We conclude this section with a comment on the optimal stopping rule characterised in \Cref{backwarddiscrete}: backward induction yields a non-randomised optimal stopping rule $T$, but this comes with no loss of generality, because for any randomised stopping rule there is a non-randomised equivalent one (in expectation), and it is therefore sufficient to work with an optimal non-randomised stopping rule. To see the equivalence in our case, start with a possibly randomised stopping rule $\psi$. An equivalent non-randomised stopping rule is then yielded by enlarging the product space with $[0,1]^{n+1}$ endowing this factor with the product uniform measure on the unit interval, and considering the interleave process $(\pi , X, U)\defeq\{\pi_1,X_1,U_1,\ldots,\pi_{n+1},X_{n+1},U_{n+1}\}$ with $\{U_i\}$ being iid copies of $U\sim \Unif[0,1]$. With respect to the filtration incorporating the uniform random variables interleaved, that is $\mathcal{F}_k=\sigma(V_{\pi_1},U_1,\ldots,V_{\pi_k},U_k)$, the randomized stopping rule yields a stopping time $\tau$. Define the nonrandomised stopping rule $\tilde{\psi}$ as \[\tilde{\psi}_i(X_1,U_1,\ldots,X_i,U_{i})\defeq\mathbbm{1}_{\{U_i<\psi_i(X_1,\ldots,X_i)\}},\] maintaining the same reward sequence. By exploiting the independence of the uniform random variables introduced, the two stopping rules are equivalent, since if we denote by $\tilde{\tau}$ the stopping time yielded by $\tilde{\psi}$, then by the independence of $U_i$ it follows that
	\begin{align*}
		\E V_{\pi_{\tilde{\tau}}}&=\E_{\pi,X,U}\sum_{i=1}^{n+1}X_i\tilde{\psi}_i(X_1,U_1\ldots,X_i,U_i)=\E_{\pi,X}\sum_{i=1}^{n+1}X_i\E_U\mathbbm{1}_{\{U_k<\psi_i(X_1,\ldots,X_i)\}}\\&=\E_{\pi,X}\sum_{i=1}^{n+1}X_i\psi_i(X_1,\ldots,X_i)=	\E V_{\pi_\tau}.
	\end{align*}
 
\end{document}